\newcommand{\Z}{\mathbb{Z}}
\newcommand{\C}{\mathbb{C}}
\newcommand{\R}{\mathbb{R}}
\renewcommand{\H}{\mathcal{H}}
\newcommand{\hgate}{\textsc{h}}
\newcommand{\tgate}{\textsc{t}}
\newcommand{\xgate}{\textsc{x}}
\newtheoremstyle{break}
  {}
  {}
  {\itshape}
  {}
  {\bfseries}
  {.}
  {\newline}
  {}
\theoremstyle{plain}
\newtheorem{theorem}{Theorem}[section]
\newtheorem{proposition}[theorem]{Proposition}
\newtheorem*{proposition*}{Proposition}
\theoremstyle{break}
\theoremstyle{definition}
\newtheorem{definition}[theorem]{Definition}
\newtheorem{example}[theorem]{Example}
\theoremstyle{remark}
\newcommand{\urlalt}[2]{\href{#2}{\nolinkurl{#1}}}
\title{Complete Equational Theories for the Sum-Over-Paths with Unbalanced Amplitudes}
\author{Matthew Amy
\institute{School of Computing Science \\ Simon Fraser University }
\email{meamy@sfu.ca}
}
\begin{document}

\maketitle

\begin{abstract}

Vilmart recently gave a complete equational theory for the balanced sum-over-paths over Toffoli-Hadamard circuits, and by extension Clifford+$\mathrm{diag}(1, \zeta_{2^k})$ circuits. Their theory is based on the phase-free ZH-calculus which crucially omits the average rule of the full ZH-calculus, dis-allowing the local summation of amplitudes. Here we study the question of completeness in unbalanced path sums which naturally support local summation. We give a concrete syntax for the unbalanced sum-over-paths and show that, together with symbolic multilinear algebra and the interference rule, various formulations of the average and ortho rules of the ZH-calculus are sufficient to give complete equational theories over arbitrary rings and fields. 

\end{abstract}

\section{Introduction}

The balanced sum-over-paths representation of a linear operator $\Psi:\C^{2^m}\rightarrow \C^{2^n}$ introduced in \cite{a18} is a symbolic representation of $\Psi$ over Boolean-valued variables, having the form
\begin{equation}
  \label{eq:pathsum1}
	\Psi\ket{\vec{x}} = \mathcal{N}\sum_{\vec{y}\in\Z_2^{k}}e^{2\pi i P(\vec{x}, \vec{y})}\ket{f(\vec{x}, \vec{y})}
\end{equation}
where $P:\Z_2^m\times\Z_2^k\rightarrow \R/2\pi$ and $f:\Z_2^m\times\Z_2^k\rightarrow \Z_2^n$ are represented by (systems of) polynomials in $m+k$ variables. The expression of \cref{eq:pathsum1} is interpreted as a sum over the \emph{paths} a system may take beginning from some initial configuration $\vec{x}$. If $\Psi$ is taken as encoding some evolution of a physical ($2^k$-level) system, the expression \cref{eq:pathsum1} coincides roughly with Richard Feynman's \emph{path integral} \cite{fh65}. As with Feynman's path integral the sum in \cref{eq:pathsum1} is \emph{balanced}, in that each path --- indexed by the values of $\vec{x}$ and $\vec{y}$ --- has the same amplitude $\mathcal{N}$ but varies in the phase $e^{2\pi i P(\vec{x}, \vec{y})}$. As standard operators used in quantum computation can be represented in this form and the representation is closed over composition and tensor products, typical quantum computational processes admit representations by balanced path sums.

It was previously shown \cite{a18} that with a concrete representation in terms of polynomials, the balanced sum-over-paths admits a simple equational theory. This equational theory was shown to be relatively complete, and with a small modification complete \cite{v21}, for Clifford operators. As the number of superfluous variables $y_i$ in the sum, called \emph{internal} and corresponding to a pair of paths between two end-points, offers an intuitive notion of complexity of the expression, the equational theory further gives rise to a natural re-write system which iteratively removes such variables from the sum. This re-write system was shown to terminate in polynomial time with a unique normal form for Clifford operators \cite{abr22}, and was further shown to perform well in practice on Clifford+$T$ circuits and channels for verification \cite{a18}.

Vilmart \cite{v22} gave an equational theory for the balanced sum-over-paths over Toffoli and Hadamard gates and showed its completeness via translation of the phase-free ZH-calculus \cite{ww19}. The phase-free ZH-calculus crucially restricts the ZH-calculus \cite{bk18} to balanced generators, allowing the direct translation of its equational theory into balanced sums --- while the full ZH-calculus admits an encoding in the balanced sum-over-paths \cite{lwk20,v21}, existing equational theories \cite{bk18,bkmww23} necessarily make use of unbalanced generators and the \emph{average} rule. It was further shown that this equational theory is complete for operators over Clifford and $R_k:=\mathrm{diag}(1, \zeta_{2^k})$ gates through an embedding into the Toffoli+Hadamard fragment. However, their re-writing system lacks the desirable properties of confluence, normal forms, and a primitive equational theory for Clifford+$R_k$. Moreover, the question of a complete equational theory for the sum-over-paths over $\C$ was left open, as well as the development of a direct analogue of the full ZH-calculus in the sum-over-paths model.

In this work we address these questions, introducing a concrete representation of the \emph{unbalanced} sums-over-paths and give complete equational theories for such sums over rings and fields. We consider general rings as fault-tolerantly constructible circuits are typically restricted to linear operators over subrings of $\mathbb{C}$ \cite{agr20}. Inspired by the ZH-calculus, completeness is attained by symbolically re-writing a sum to a unique normal form explicitly encoding the matrix entries. In essence, our system internalizes the method of falling back to explicit evaluation in cases where no progress can be made with re-writing \cite{a18}. As a result, the practical applicability of our equational theory is limited --- our goals are instead to provide representations and complete theories for general sums-over-paths so that effective re-writing systems may be further developed. 

The paper is organized as follows. In \cref{sec:background} we review the balanced sum-over-paths and its equational theories. In \cref{sec:unbalanced} we define a representation of the unbalanced sum-over-paths and give an equational theory which is complete for arbitrary rings. In \cref{sec:addition} we give a weaker equational theory which is not sound over rings, but is shown to be complete over any field.

\section{The balanced sum-over-paths}\label{sec:background}

In the path integral viewpoint, the action of a linear operator $\Psi:\H_1 \rightarrow \H_2$ between Hilbert spaces $\H_1$ and $\H_2$ on a state $\ket{i}$ of some orthonormal basis $\{\ket{i}\}$ of $\H_1$ can be described as a sum over some collection $\Pi$ of paths, where the path $\pi\in \Pi$ has amplitude $\psi(\pi)\in\C$ and ends in a state $\ket{f(\pi)}$ of an orthonormal basis of $\H_2$:
\[
	\Psi\ket{i} = \sum_{\pi\in \Pi} \psi(\pi)\ket{f(\pi)}.
\]
In contrast to the operator representation $\Psi\ket{i} = \sum_{j}\alpha_{ij}\ket{j}$, in general there may be many superimposed paths leading to a particular basis state, their amplitudes adding in these cases and resulting in interference. Likewise, the sequential composition of two operators is described by composing paths along their endpoints and multiplying the amplitude along each segment, in essence delaying evaluation of any interfering paths. This provides flexibility in the evaluation of individual amplitudes, but on the flip side requires effective means of representing and reasoning about a system of paths to be useful.

In \cite{a18} a concrete representation of \emph{amplitude-balanced} sums over $2^n$-dimensional Hilbert spaces was given via multilinear polynomials. An amplitude-balanced sum is one in which every path $\pi$ with non-zero amplitude has equal magnitude but may vary in the phase. By restricting to balanced sums, individual path amplitudes are described by unit-norm complex numbers, whose multiplicative group is isomorphic to the additive group $\mathbb{R}/2\pi$. In particular, $\psi(\pi) \approx e^{2\pi iP(\pi)}$ for some $P:\Pi \rightarrow \R/2\pi$ which has a unique representation as a multilinear polynomial, and if $\phi(\pi')  \approx e^{2\pi iQ(\pi')}$, then $\psi(\pi)\cdot \phi(\pi')  \approx e^{2\pi i [P(\pi) + Q(\pi')]}$ --- i.e. the phase along a composite path --- is uniquely representable in polynomial time. We review this representation below.

\begin{definition}[Balanced sum-over-paths]
A balanced path sum is an expression of the form
\[
	\Psi\ket{\vec{x}} = \mathcal{N}\sum_{\vec{y}\in\Z_2^{k}}e^{2\pi i P(\vec{x}, \vec{y})}\ket{f(\vec{x}, \vec{y})},
\]
where $\mathcal{N}\in\mathbb{C}$ and $P:\Z_2^m\times\Z_2^k\rightarrow \R/2\pi$ and $f:\Z_2^m\times\Z_2^k\rightarrow \Z_2^n$ are represented as a multilinear polynomial and a sequence of $n$ multilinear polynomials in $m+k$ variables, respectively.
\end{definition}

We use $\ket{\Psi}$ to denote the sum-over-paths representation of $\Psi$, or just $\Psi$ when it is clear from the context that we mean the symbolic expression rather than the linear operator. The variables appearing in a sum-over-paths expression $\ket{\Psi}$ which are not summed over are called \emph{free variables}. We denote the set of free variables of $\Psi$ by $FV(\Psi)$ and for the purpose of substitution use the notation $\ket{\Psi(x)}$ to identify a variable which may appear free in $\ket{\Psi}$. Specifically, given a sum-over-paths $\ket{\Psi(x)}$, $\ket{\Psi(f)}$ denotes the (capture-avoiding) substitution of $f$ for every free occurrence of $x$ in $\ket{\Psi}$, of which there may be none. We say a sum $\ket{\Psi}$ is \emph{closed} if $FV(\Psi) = \emptyset$, in which case $\ket{\Psi}$ corresponds to a vector. As a convention, we often use variable names $x_i$ to denote the free variables of a path sum and $y_i$, $z_i$ to denote variables which are summed, though it should be understood that this is not a rule and sum-over-paths expressions may contain arbitrary variables in free or summed positions. We further use letters $f$, $g$, $h$ to refer to Boolean-valued functions or symbolic expressions, and uppercase letters $P$, $Q$, $R$ to refer to symbolic expressions in other rings.

By linearity, compatible balanced sums may be sequentially composed through variable substitution:
\[
	\Phi\Psi\ket{\vec{x}} = \mathcal{N}\sum_{\vec{y}\in\Z_2^{k}}e^{2\pi i P(\vec{x}, \vec{y})}\Phi\ket{f(\vec{x}, \vec{y})}.
\]
As substitution involves substituting symbolic expressions over $\Z_2$ in the phase, a \emph{lifting} construction \cite{a18} is used to embed polynomial arithmetic over $\mathbb{Z}_2$ into polynomials over $\R$ (or more generally, any unital ring $\mathcal{R}$). In particular, we define the lifting $\overline{\cdot}$ of $(\mathbb{Z}_2, \oplus, \cdot)$ into $(\mathcal{R}, +, \cdot)$ recursively by
\begin{align*}
	\hspace*{5em} \overline{0} &= 0_\mathcal{R} \hspace{-5em} & \overline{f \cdot g} &= \overline{f} \cdot \overline{g} \\
	\hspace*{2em} \overline{1} &= 1_\mathcal{R}  \hspace{-5em} & \overline{f \oplus g} &= \overline{f} + \overline{g} - (2 \cdot \overline{f} \cdot \overline{g}).
\end{align*}
The tensor product, or parallel composition, can be defined via
\[
	(\Psi\otimes\Phi)(\ket{\vec{x}}\otimes \ket{\vec{y}}) = \Psi\ket{\vec{x}}\otimes\Phi\ket{\vec{y}},
\]
where the phases of $\Psi\ket{\vec{x}}$ and $\Phi\ket{\vec{y}}$ are multiplied and their final states are concatenated.
\begin{example}
  \label{ex:cliffordt2}
	The gates $\xgate$ and $\tgate$ admit the following representations as balanced sums:
 	 \begin{itemize}
  	\item $\xgate\ket{x} = \ket{1\oplus x}$ and
 	 \item $\tgate\ket{y} = \omega^y\ket{y}$, where $\omega=e^{\frac{2\pi i}{8}}$.   
 	 \end{itemize}
	Composing $\tgate$ after $\xgate$, we can compute the sum-over-paths expression of $\tgate\xgate$ by substituting $y$ in the expression of $\tgate$ with $x\oplus 1$, which lifts to $\overline{1\oplus x} = 1 - x$ in the exponent of $\omega$:
\[
	\tgate\xgate\ket{x} =  \omega^{1 - x}\ket{1 \oplus x}
\]
\end{example}

While the paths in a balanced path sum with non-zero amplitude all have the same magnitude $\mathcal{N}$, matrices which have entries of different magnitudes can be represented as balanced sums with interfering paths, as the following example illustrates.

\begin{example}
	The controlled-Hadamard gate
	\[
		\Lambda(\hgate) = \ket{0}\bra{0}\otimes I + \ket{1}\bra{1}\otimes H = \frac{1}{\sqrt{2}}\begin{bmatrix} \sqrt{2} & 0 & 0 & 0 \\ 0 & \sqrt{2} & 0 & 0 \\ 0 & 0 & 1 & 1 \\ 0 & 0 & 1 & -1 \end{bmatrix}
	\]
	when viewed as a collection of \emph{unique} transitions (i.e. non-interfering) in the computational basis is necessarily unbalanced. However, the $\Lambda(H)$ gate may be represented as a balanced sum-over-paths by using the control bit to cause the intermediate paths to interfere when it is in the $0$ state:
	\[
		\Lambda(\hgate)\ket{x_1x_2} = \frac{1}{\sqrt{2}} \sum_{y\in\Z_2} \omega^{(1 - x_1)(2y - 1)}(-1)^{x_1x_2y}\ket{x_1}\ket{(1 \oplus x_1)x_2 \oplus x_1y}
	\]
	Note that when $x_1 = 0$ we have $\frac{1}{\sqrt{2}} \sum_{y\in\Z_2} \omega^{(2y - 1)}\ket{0}\ket{x_2} = \frac{\omega + \omega^\dagger}{\sqrt{2}}\ket{0}\ket{x_2} = \ket{0}\ket{x_2}$, while when $x_1 = 1$ we have the sum $\frac{1}{\sqrt{2}} \sum_{y\in\Z_2} (-1)^{x_2y}\ket{1}\ket{y}$
	which is the representation of $\ket{1}\otimes (H\ket{x_2})$.

\end{example}

It has been shown that the balanced sum-over-paths is universal for linear operators over qubit ($2^n$-dimensional) Hilbert spaces, via a translation from the universal ZH-calculus \cite{v21}. Below we give a model of the universal ZX-calculus \cite{cd08} which is more natural to specify in the balanced sum-over-paths as the ZX-calculus is generated by balanced operators, while the $\hgate$-boxes of the ZH-calculus are unbalanced when the amplitude is non-unital.

\begin{example}
A simple model of the ZX-calculus via balanced sums can be defined over the universal generating set consisting of the $Z$-spider and Hadamard as so:
\begin{align*}
\left\llbracket
\begin{ZX}
\leftManyDots{n} \zxZ{\alpha} \rightManyDots{m} 
\end{ZX} \right\rrbracket\ket{\vec{x}} = \frac{1}{2^n}\sum_{\vec{y}\in\Z_2^n,z\in\Z_2}\alpha^z(-1)^{\sum_{i=1}^n y_i(x_i + z)}\ket{zz\cdots z}  \qquad 
\left\llbracket\begin{ZX}[ampersand replacement=\&]
\zxNone{} \rar \& \zxH{} \rar \& \zxNone{}
\end{ZX}\right\rrbracket\ket{x} = \frac{1}{\sqrt{2}}\sum_{y\in\Z_2}(-1)^{xy}\ket{y}
\end{align*}
Note that the sum for the $Z$-spider forces any path with non-zero amplitude to satisfy $z=x_1=x_2=\cdots=x_n$, since for any $i$, $\sum_{y_i\in\Z_2}(-1)^{y_i(x_i + z)} = 0$ whenever $z\neq x_i$, and $2$ otherwise. Encodings of an arbitrary $Z$-spider using fewer variables are possible, for example $\frac{1}{2^n}\sum_{y,z\in\Z_2}\alpha^z(-1)^{y(1 + \prod_{i=1}^n (x_i + z + 1))}\ket{zz\cdots z}
,$ but have size exponential in $n$.

\end{example}

We use $\equiv_{T}$ to denote the equivalence of two path sums up to a theory $T$ defined by a set of (sound) equations, together with the congruence
\[
	 \ket{\Psi} \equiv_{T} \ket{\Phi} \implies \mathcal{N}\sum_{\vec{y}\in\Z_2^k}e^{2\pi i P(\vec{x},\vec{y})}\ket{\Psi} \equiv_{T} \mathcal{N}\sum_{\vec{y}\in\Z_2^k}e^{2\pi i P(\vec{x},\vec{y})}\ket{\Phi}
\]
We use $\equiv_{e}$ to denote equivalence up to an individual equation $e$. We say an equational theory $T$ is \emph{complete} for a subset $\mathcal{C}$ of path sums if whenever $\ket{\Psi},\ket{\Phi}\in \mathcal{C}$,
\[
	\Psi = \Phi \implies \ket{\Psi} \equiv_{T} \ket{\Phi}.
\]

\begin{figure}
\begin{align}
	\sum_{y\in\Z_2}\ket{\Psi} &\equiv 2\ket{\Psi} \tag*{(E)}\label{eq:e} \\
	\sum_{x,y\in\Z_2}(-1)^{y(x + f)}\ket{\Psi(x)} &\equiv 2\ket{\Psi(f)} \tag*{(H)}\label{eq:i} \\
	\sum_{y\in\Z_2}i^y(-1)^{yf}\ket{\Psi} &\equiv \omega\sqrt{2}(-i)^{\overline{f}}\ket{\Psi} \tag*{($\omega$)}\label{eq:u}
\end{align}
\caption{A Clifford-complete system of equations for the balanced sum-over-paths, denoted $\equiv_{\mathrm{Cliff}}$. In all rules above $y\notin FV(\Psi)$ and $f$ is some Boolean expression such that $x,y\notin FV(f)$.}
\label{fig:cliffrewrite}
\end{figure}

\Cref{fig:cliffrewrite} gives the \ref{eq:e}, \ref{eq:i}, and \ref{eq:u} rules of the sum-over-paths which define the equational theory $\equiv_{\mathrm{Cliff}}$. It was previously shown that $\equiv_{\mathrm{Cliff}}$ is complete for Clifford path sums \cite{v21,abr22}.

\begin{example}
	The following equalities are derivable:
	\begin{align*}
		\sum_{y\in\Z_2}\ket{\Psi(y)} &\equiv_{\mathrm{Cliff}} \sum_{y\in\Z_2}\ket{\Psi(y \oplus f)} \textrm{ where $f$ is Boolean and $y\notin FV(f)$} \\
		\frac{1}{\sqrt{2}} \sum_{y\in\Z_2} \omega^{(2y - 1)}\ket{0}\ket{x_2} &\equiv_{\mathrm{Cliff}} \ket{0}\ket{x_2} \\
\left\llbracket (\begin{ZX}[ampersand replacement=\&]
\zxNone{} \rar \& \zxH{} \rar \& \zxNone{}
\end{ZX})^{\otimes m} \circ
\begin{ZX}
\leftManyDots{} \zxZ{\alpha} \rightManyDots{} 
\end{ZX} \circ (\begin{ZX}[ampersand replacement=\&]
\zxNone{} \rar \& \zxH{} \rar \& \zxNone{}
\end{ZX})^{\otimes n}\right\rrbracket\ket{\vec{x}}&\equiv_{\mathrm{Cliff}} \frac{1}{2^{n+m}}\sum_{y\in\Z_2,\vec{z}\in\Z_2^m}\alpha^y(-1)^{\sum_{i=1}^n x_iy}(-1)^{\sum_{j=1}^m yz_i}\ket{\vec{z}}
	\end{align*}
	The first equation is the \emph{variable change} rule from the Clifford-complete equational theory of \cite{v21}, whose derivation by the \ref{eq:e} rule was shown in \cite{abr22}. The second encodes the evaluation of $\Lambda(\hgate)\ket{0}\ket{x_2} = \ket{0}\ket{x_2}$ and follows from a single application of \ref{eq:u}. The third equation models the $X$-spider of the ZX-calculus by the color change law \cite{cd08}. 
\end{example}

\paragraph{Interference, algebraic varieties, and completeness for Toffoli+Hadamard}

As noted in \cite{a18}, \ref{eq:i} arises as an instance of a general (binary) \emph{interference} rule,
\begin{equation}
	\sum_{y\in\Z_2}(-1)^{yF}\ket{\Psi} \equiv 2\ket{\Psi\raisebox{-.5ex}{\rule{.4pt}{1.5ex}}{\:}_{F=0}}. \tag*{(I)}\label{eq:igen}
\end{equation}
where $F$ is a polynomial over $\Z_2$. Viewing $F$ as a proposition on the paths indexed by $FV(F)$, the sum $\sum_{y\in\Z_2}(-1)^{yF}\ket{\Psi}$ filters out paths satisfying $F$, while paths which do not satisfy $F$ pass through with double amplitude. However, to write the restriction $\ket{\Psi\raisebox{-.5ex}{\rule{.4pt}{1.5ex}}{\:}_{F=0}}$ as an expression, $\ket{\Psi} = \mathcal{N}\sum_{\vec{y}}e^{2\pi i P(\vec{x}, \vec{y})}\ket{f(\vec{x}, \vec{y})}$ must be expressed as a sum over the solutions of the equation $F(\vec{x},\vec{y}) = 0$. Recall that the algebraic variety $\mathcal{V}(I)$ of a polynomial ideal $I$ consists of all points $(a_1,\dots,a_k)$ such that $f(a_1,\dots, a_k) = 0$ for every polynomial $f$ in $I$. We may hence write the restricted sum as
\[
	\mathcal{N}\sum_{(\vec{x},\vec{y})\in\mathcal{V}(I)}e^{2\pi i P(\vec{x}, \vec{y})}\ket{f(\vec{x}, \vec{y})},
\]
where $I = \langle F \rangle$. Note that $P$ and $f$ may be canonically written modulo the ideal $I$ using Gr\"{o}bner bases, though the resulting re-write system is not an equational theory in the sense we consider here. Instead we may restrict $f$ to cases which can be solved by substitution. The simplest such cases are when $F = 0$ which is solved trivially for any point in the variety, and when $F = x + g$, $x\notin FV(g)$, which is solved by setting $x = g$. These two cases result in the \ref{eq:e} and \ref{eq:i} rules. Moreover, both equations are \emph{complete} relative to the variety $\mathcal{V}(I)$ in that they completely characterize the solutions to $f$.

In \cite{v22} Vilmart gave a complete equational theory for path sums over Toffoli and Hadamard, via restricted cases of \ref{eq:igen}. Such sums can be written in the form
\[
	\Psi\ket{\vec{x}} = \frac{1}{\sqrt{2^k}}\sum_{\vec{y}}(-1)^{P(\vec{x}, \vec{y})}\ket{f(\vec{x}, \vec{y})}.
\]
We define $\equiv_{\mathrm{TH}}$ to be equivalence of balanced sums up to $\equiv_{\mathrm{Cliff}}$ as well as the additional rules of \cref{fig:threwrite}. As noted in \cite{v22}, all three new equations arise as particular instances of the binary interference rule. The \ref{eq:ii} rule, which subsumes \ref{eq:i}, arises from \ref{eq:igen} when $F = x\cdot g + g\cdot f + 1$, in which case $x + g + 1 \in \langle F \rangle$ and hence $x = g + 1$ is a partial solution to $F = 0$. Likewise, \ref{eq:ii2} arises from the intersection of the varieties $f=0$ and $g=0$, where since $\langle f, g\rangle = \langle f + g + fg \rangle$ over $\Z_2$ the two equations to be combined into a single one without affecting the variety. The \ref{eq:z} rule arises when $F = 1$ and hence the variety is empty.
\begin{figure}
\begin{align}
	\sum_{y\in\Z_2}\sum_{x\in\Z_2}(-1)^{y(x\cdot g + g\cdot f + 1)}\ket{\Psi(x)} &\equiv \sum_{y\in\Z_2}(-1)^{y(g + 1)}\ket{\Psi(1 + f)} \tag*{(Hgen)}\label{eq:ii} \\
	\sum_{y\in\Z_2}\sum_{x\in\Z_2}(-1)^{y\cdot f +x\cdot g}\ket{\Psi} &\equiv 2\sum_{y\in\Z_2}(-1)^{y(f + g + f\cdot g)}\ket{\Psi} \tag*{(Hrel)}\label{eq:ii2}\\
	\sum_{y\in\Z_2}\sum_{x\in\Z_2}(-1)^{y}\ket{\Psi} &\equiv 0\ket{\Psi} \tag*{(Z)}\label{eq:z}
\end{align}
\caption{A complete equational theory $\equiv_{\mathrm{TH}}$ for path sums over Toffoli and Hadamard gates \cite{v22}. In all rules $y\notin FV(\Psi)$ and $f,g$ are Boolean expressions such that $x,y\notin FV(f)\cup FV(g)$}
\label{fig:threwrite}
\end{figure}

\section{The unbalanced sum-over-paths}\label{sec:unbalanced}

While computationally efficient for many problems, balanced sums are unwieldy for reasoning about vectors and matrices with entries of varying magnitude. Such states often arise in probabilistic quantum computations and algorithms, such as Grover's search or Shor's algorithm. Moreover, canonical forms for such operators are difficult to define and test for equality, as the following example illustrates.

\begin{example}
Consider the unit vector
\[
	\ket{\psi} = \frac{1}{\sqrt{1 + p^2}}(\ket{0} + p\ket{1})
\]
where $p$ is an odd prime. Any representation of $\ket{\psi}$ by a balanced sum with $\pm 1$ phases must satisfy $\braket{0|\psi} = 1$, $\braket{1|\psi} = p$, and hence must consist of at least $p+1$ distinct paths. Now let $f$ and $g$ be Boolean expressions in free variables $\{y_i\}$ with $1$ and $p$ satisfying assignments, respectively. Then
\[
	\frac{1}{2\sqrt{1 + p^2}}\sum_{x\in\Z_2}\sum_{\vec{y}\in\Z_2^{k}}\sum_{z\in\Z_2}(-1)^{z[(1-x)(1 + f(\vec{y})) + x(1 + g(\vec{y}))]}\ket{x}
\]
is a valid representation of $\ket{\psi}$, as is the representation above where $f$ and $g$ are replaced with any other Boolean expressions with the same number of solutions. By \ref{eq:igen}, we can re-write this sum over the variety generated by the ideal $I=\langle(1-x)(1 + f(\vec{y})) + x(1 + g(\vec{y}))\rangle$ as $\ket{\Psi} = \sum_{(x,\vec{y})\in \mathcal{V}(I)}\ket{x}$. However, if we take this as a normal form it is surely not unique, as any other variety with the same number of points for each $x$ --- for instance, any variety $\mathcal{W}$ equal up to a permutation of the coordinates of $\mathcal{V}$ --- gives the same operator.

\end{example}

In order to allow the natural representation of linear algebraic objects with unbalanced magnitudes, we now develop a generalization to amplitudes which may be expressed as Boolean powers of elements taken from some ring $\mathcal{R}$. Recall that integer powers may be defined in any unital ring $\mathcal{R}$ as
\[
	0^0 := 1, \qquad r^0 := 1, \qquad r^n := r\cdot r^{n-1}.
\]

Our language of sums is comprised of expressions of three types: Boolean expressions used to denote paths, $\mathcal{R}$-valued expressions, and linear operators over $\mathcal{R}$ in the computational basis.

\begin{definition}[Unbalanced sum-over-paths]\label{eq:unbalanced}
An unbalanced sum-over-paths is an expression $\ket{\Psi}$ of the following language
\begin{align*}
		f &::= 0 \mid 1 \mid x \mid f_1\cdot f_2 \mid f_1\oplus f_2 \mid \neg f := 1 \oplus f\\
		r &::= \alpha, \beta \in \mathcal{R} \mid r^f \mid r_1r_2 \mid r_1 + r_2 \\
		\ket{\Psi} &::= \sum_{\vec{y}} r \ket{f_1\cdots f_n}.
\end{align*}
Variables $x,y, z,\dots$ in all types of expressions range over Boolean values $\Z_2$ and $\mathcal{R}$ is a commutative ring.
\end{definition}
Expressions $f$, $r$, and $\ket{\Psi}$ are referred to as Boolean, $\mathcal{R}$, and sum-over-paths expressions, respectively. We denote by $FV(f)$, $FV(r)$, and $FV(\Psi)$ the free variables appearing in the given expression. As with balanced sums, variables $\vec{y}$ which are summed over are bound and hence not included in the set of free variables of an unbalanced sum. An expression is \emph{closed} if it contains no free variables. Note that closed unbalanced sums correspond to vectors.

$\mathcal{R}$-expressions in free variables $\{x_i\}$ may be interpreted as a non-standard syntax for the polynomial ring $\mathcal{R}[x_1,\dots, x_k]/\langle x_1^2 - x_1, \dots, x_k^2 - x_k\rangle$ which favours multiplication over addition in terms of computational efficiency. This allows our representation to coincide with balanced sums when possible, allowing a balanced representation to be used and manipulated normally, but providing an ``escape hatch'' in the form of ring sums. In \cref{sec:addition} we consider representations where the ring sum is dropped entirely.

In the balanced sum-over-paths it's generally not obvious how a given matrix $A\in\mathcal{M}_{n \times m}(\mathcal{R})$ may be represented directly, hence universality is achieved via the construction of a known universal set of generators. By contrast, unbalanced sums allow the direct representation of $A$, as the amplitude function $\psi(\vec{x},\vec{y}) = \bra{\vec{y}}A\ket{\vec{x}}$ can be written directly as an $\mathcal{R}$-expression. We first define the notation $\vec{x} = \vec{y}$ as shorthand for the bitwise equality of $\vec{x}$ and $\vec{y}$,
\[
\vec{x} = \vec{y} := \prod_i x_i \oplus \neg y_i.
\]
Then we may write $A$ as an unbalanced sum of the following form:
\[
	A\ket{\vec{x}} = \sum_{\vec{y}}\alpha_{00\cdots 0}^{00\cdots 0 = \vec{x}\vec{y}}\alpha_{00\cdots 1}^{00\cdots 1 = \vec{x}\vec{y}}\cdots\alpha_{11\cdots 1}^{11\cdots 1 = \vec{x}\vec{y}}\ket{\vec{y}}
\]
where $\vec{x}\vec{y}$ denotes the concatenation of $\vec{x}$ and $\vec{y}$, and $\alpha_{\vec{x}\vec{y}}$ is equal to $\bra{\vec{y}}A\ket{\vec{x}}$. Intuitively, for a given value of $\vec{y}$ and $\vec{x}$, the (exponential-size) $\mathcal{R}$-expression above evaluates to $\alpha_{\vec{x}\vec{y}} = \bra{\vec{y}}A\ket{\vec{x}}$, as the exponent of every of other $\alpha_{\vec{x}'\vec{y}'}$ evaluates to zero. We write out the product explicitly rather than as $\Pi_{\vec{z}}\alpha_{\vec{z}}^{\vec{z} = \vec{x}\vec{y}}$ so as to avoid confusion with the use of $\Pi$ as mathematical syntax and $\Sigma$ as a syntactical element of path sums.

\begin{proposition}[Universality]
	Any linear operator $\mathcal{R}^{2^n} \rightarrow \mathcal{R}^{2^m}$ can be expressed as an unbalanced path sum over $\mathcal{R}$.
\end{proposition}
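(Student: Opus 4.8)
The plan is simply to verify that the explicit construction displayed just above the statement works; unlike the balanced case, no universal generating set is needed because the matrix entries can be written directly into the amplitude. Fix a linear operator $A:\mathcal{R}^{2^n}\rightarrow\mathcal{R}^{2^m}$, and for a bitstring $\vec{z}\in\Z_2^{n+m}$ that we parse as a concatenation $\vec{z}=\vec{x}\vec{y}$ with $\vec{x}\in\Z_2^n$, $\vec{y}\in\Z_2^m$, set $\alpha_{\vec{z}}:=\bra{\vec{y}}A\ket{\vec{x}}\in\mathcal{R}$. I claim the unbalanced sum
\[
	\sum_{\vec{y}}\;\alpha_{00\cdots 0}^{00\cdots 0=\vec{x}\vec{y}}\,\alpha_{00\cdots 1}^{00\cdots 1=\vec{x}\vec{y}}\cdots\alpha_{11\cdots 1}^{11\cdots 1=\vec{x}\vec{y}}\,\ket{\vec{y}}
\]
denotes $A$. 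Note this is literally a term of the grammar of \Cref{eq:unbalanced}: each $\alpha_{\vec{z}}$ is a ring constant, each exponent $\vec{z}=\vec{x}\vec{y}$ is a Boolean expression (built from variables, $\oplus$, $\cdot$ and $\neg$ via the shorthand $\vec{x}=\vec{y}:=\prod_i x_i\oplus\neg y_i$), the amplitude is a product of Boolean powers $r^f$, and the output ket carries Boolean expressions, namely the summed variables.

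The two things to check are the behaviour of the equality indicator and of the ring powers. First, for Boolean values $a,b$ one has $a\oplus\neg b = a\oplus 1\oplus b = \neg(a\oplus b)$, which equals $1$ iff $a=b$; hence for any fixed assignment of the free variables $\vec{x}$ and any value of the summed variables $\vec{y}$, the expression $\vec{z}=\vec{x}\vec{y}$ evaluates to $1$ when the constant string $\vec{z}$ equals the concatenation $\vec{x}\vec{y}$ and to $0$ otherwise. Consequently exactly one factor in the product has exponent $1$ and every other factor has exponent $0$; using the convention $r^0=1$ (in particular $0^0=1$) all of those other factors collapse to $1$, leaving $\alpha_{\vec{x}\vec{y}}^{1}=\bra{\vec{y}}A\ket{\vec{x}}$. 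Thus the expression denotes the map $\vec{x}\mapsto\sum_{\vec{y}\in\Z_2^m}\bra{\vec{y}}A\ket{\vec{x}}\,\ket{\vec{y}}$, which agrees with $A$ on every computational basis vector and hence, by linearity, equals $A$.

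There is essentially no obstacle here; the only points that deserve care are the $0^0=1$ convention, which is exactly what lets a zero matrix entry contribute a harmless factor of $1$ rather than annihilating the path, and the observation that the displayed amplitude, once written out coordinate by coordinate, is a genuine $r$-expression rather than informal notation (which is why the product is written out explicitly rather than as $\prod_{\vec{z}}\alpha_{\vec{z}}^{\vec{z}=\vec{x}\vec{y}}$). I would also note in passing that this encoding has size exponential in $n+m$, so the proposition asserts nothing about succinctness — compact representations of structured operators are instead obtained by composition of smaller sums, with the explicit encoding serving as a last resort.
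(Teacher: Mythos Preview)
Your proposal is correct and follows exactly the paper's approach: the paper does not give a separate proof for this proposition but instead states it immediately after the explicit construction and the one-sentence explanation that for fixed $\vec{x},\vec{y}$ every exponent but one evaluates to zero, leaving $\alpha_{\vec{x}\vec{y}}=\bra{\vec{y}}A\ket{\vec{x}}$. Your write-up simply spells out that verification in more detail (the behaviour of the equality indicator, the $0^0=1$ convention, and that the expression is a bona fide term of the grammar), together with the same remark about exponential size.
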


\begin{example}
The $\Lambda(\hgate)$ gate can be expressed as the unbalanced sum
\[
	\Lambda(\hgate)\ket{x_1x_2} = \sum_{y} 0^{\neg x_1(x_2 \oplus y)}(1/\sqrt{2})^{x_1}(-1)^{x_1x_2y}\ket{x_1y}.
\]
\end{example}

\definecolor{colorZxZ}{RGB}{255,255,255}
\definecolor{colorZxX}{RGB}{255,255,255}
\definecolor{colorZxH}{RGB}{255,255,255}
\tikzset{
my box/.style={inner sep=1pt, draw, thick, fill=white,anchor=center},
}

\begin{example}
While the generalized $\hgate$-boxes of the ZH-calculus \cite{bk18} admit an indirect encoding in the balanced sum-over-paths via Euler angles \cite{v21},
the ZH-calculus can be directly encoded in the unbalanced sum-over-paths as below.
\begin{align*}
\left\llbracket
\begin{ZX}[/zx/user overlay nodes/.style={
zxZ/.append style={fill=none},zxX/.append style={rectangle}}
]
\leftManyDots{n} \zxZ{} \rightManyDots{m} 
\end{ZX} \right\rrbracket\ket{\vec{x}} &= \sum_{\vec{y}\in\Z_2^n}\sum_{z\in\Z_2}2^{-n}(-1)^{\sum_{i=1}^n y_i(x_i + z)}\ket{zz\cdots z}  \qquad
\left\llbracket\begin{ZX}[ampersand replacement=\&]
\leftManyDots{n} |[my box]| \alpha \rightManyDots{m} 
\end{ZX}\right\rrbracket\ket{\vec{x}} = \sum_{\vec{y}\in\Z_2^n}\alpha^{x_1\cdots x_ny_1\cdots y_m}\ket{\vec{y}}
\end{align*}
\end{example}

As is customary, to simplify the notation and proofs we view a linear operator $A:\mathcal{R}^n\rightarrow \mathcal{R}^m$ as a vector $A\in\mathcal{R}^{nm}$ via the channel-state duality and define normal forms only on closed sums. Note that $\eta = \sum_{y}\ket{yy}$ and its adjoint $\epsilon\ket{xy} = \frac{1}{2}\sum_{z}(-1)^{z(x \oplus y)}$, i.e. a ``cup and cap,'' are well-defined over any unital ring $\mathcal{R}$, hence we can move between the operator and vector view freely.

\begin{definition}[Normal form]
	A \emph{normal form} is a closed, unbalanced sum of the following form:
	\begin{equation} 
		\sum_{\vec{x}}\alpha_{00\cdots 0}^{00\cdots 0 = \vec{x}}\alpha_{00\cdots 1}^{00\cdots 1 = \vec{x}}\cdots\alpha_{11\cdots 1}^{11\cdots 1 = \vec{x}}\ket{\vec{x}}
	\end{equation}
\end{definition}

\begin{example}
The $\Lambda(\hgate)$ gate has the normal form below, with $0$'s suppressed:
\[
	\frac{1}{\sqrt{2}}\sum_{\vec{x}} \sqrt{2}^{0000=\vec{x}}\sqrt{2}^{0101=\vec{x}}1^{1010=\vec{x}}1^{1011=\vec{x}}1^{1110=\vec{x}}(-1)^{1111=\vec{x}}\ket{\vec{x}}.
\]
Note that $x_1$ and $x_2$ correspond to the first and second input bits, respectively, while $x_3$ and $x_4$ correspond to the first and second output.
\end{example}

We remark that normal forms are unique, which is a trivial consequence of the fact that they explicitly represent vectors in the computational basis by a sequence of $2^n$ amplitudes.
\begin{proposition}\label{prop:unique}
	Let $\Psi$ be a vector in $\mathcal{R}^{2^n}$. Then $\Psi$ has a unique normal form.
\end{proposition}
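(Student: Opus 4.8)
The plan is to prove existence and uniqueness together, with essentially all of the work in a single observation: a normal form, read off in the computational basis, simply displays its own coefficients. The key technical point is how the product $\alpha_{00\cdots 0}^{00\cdots 0 = \vec{x}}\cdots\alpha_{11\cdots 1}^{11\cdots 1 = \vec{x}}$ degenerates once the variable $\vec{x}$ is assigned a concrete bit string.

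First I would establish the following evaluation claim: for any tuple $(\alpha_{\vec{z}})_{\vec{z} \in \Z_2^n}$ of elements of $\mathcal{R}$ and any $\vec{a} \in \Z_2^n$, the normal form $N$ with coefficient data $(\alpha_{\vec{z}})$ satisfies $\braket{\vec{a} | N} = \alpha_{\vec{a}}$. Indeed, for a fixed $\vec{z}$ the Boolean expression $\vec{z} = \vec{x}$, which by definition is $\prod_i z_i \oplus \neg x_i$, evaluates under the assignment $\vec{x} \mapsto \vec{a}$ to $1$ if $\vec{z} = \vec{a}$ and to $0$ otherwise, since each factor $z_i \oplus \neg a_i$ equals $1$ exactly when $z_i = a_i$. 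Using the ring power conventions $r^0 := 1$ (in particular $0^0 := 1$) and $r^1 = r$, every factor $\alpha_{\vec{z}}^{\vec{z} = \vec{a}}$ then equals $1$ except the one with $\vec{z} = \vec{a}$, which equals $\alpha_{\vec{a}}$, so the whole product collapses to $\alpha_{\vec{a}}$. As the summand indexed by $\vec{x}$ contributes only to the ket $\ket{\vec{x}}$, just the term $\vec{x} = \vec{a}$ contributes to $\ket{\vec{a}}$, and $\braket{\vec{a} | N} = \alpha_{\vec{a}}$.

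Existence is then immediate: given $\Psi \in \mathcal{R}^{2^n}$, set $\alpha_{\vec{x}} := \braket{\vec{x} | \Psi}$ and let $N$ be the corresponding normal form; the evaluation claim gives $\braket{\vec{a} | N} = \alpha_{\vec{a}} = \braket{\vec{a} | \Psi}$ for all $\vec{a}$, hence $N = \Psi$. (This is also the vector case of the Universality proposition, whose construction, specialized to a state, produces exactly a normal form after renaming the summation variable.) For uniqueness, suppose $N$ with data $(\alpha_{\vec{z}})$ and $N'$ with data $(\beta_{\vec{z}})$ are both normal forms representing $\Psi$; applying the evaluation claim to each yields $\alpha_{\vec{a}} = \braket{\vec{a} | N} = \braket{\vec{a} | \Psi} = \braket{\vec{a} | N'} = \beta_{\vec{a}}$ for every $\vec{a}$. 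Since a normal form is determined by its fixed shape together with the list of ring constants $\alpha_{\vec{z}}$, the expressions $N$ and $N'$ coincide. I do not expect any real obstacle here --- the argument is bookkeeping --- but two points need care: the convention $0^0 = 1$, without which the product would fail to collapse at a zero amplitude; and the precise meaning of \emph{unique}, namely that the exponents and the index range of the product are fixed by the definition of a normal form, so that uniqueness amounts exactly to uniqueness of the coefficient tuple $(\alpha_{\vec{z}})$, which the evaluation claim pins down as the list of entries of $\Psi$.
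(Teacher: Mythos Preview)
Your proposal is correct and matches the paper's approach: the paper does not give a formal proof but simply remarks that uniqueness ``is a trivial consequence of the fact that they explicitly represent vectors in the computational basis by a sequence of $2^n$ amplitudes,'' which is precisely the evaluation claim you spell out. You have supplied the details the paper omits, including the role of $0^0 = 1$ and the observation that the normal form's shape is fixed so that uniqueness reduces to uniqueness of the coefficient tuple.
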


\subsection{Equational theory of unbalanced sums over $\mathcal{R}$}

\Cref{fig:rewrite} gives an equational theory, denoted $\equiv_{\mathcal{R}}$, for unbalanced sums. We separate equations into three classes: equations on Boolean expressions, $\mathcal{R}$-expressions, and equations involving sums. The equational theory of Boolean expressions is simply the well-known equational theory of commutative Boolean rings and is only provided for completeness. The equational theory of $\mathcal{R}$-expressions includes the axioms of commutative, unital rings on the left, and equations specific to $f$-powers on the right. The two rules involving path sums are the usual \ref{eq:iu} rule, and the new \emph{sum} rule \ref{eq:s} which internalizes sums over variables as sums of $\mathcal{R}$-expressions.

\begin{figure}[h]
\begin{subfigure}{\textwidth}
	\begin{minipage}[t]{0.2\textwidth}
	\begin{align*}
		f \oplus 0 &\equiv  f \\
		f \oplus f &\equiv  0
	\end{align*}
	\end{minipage}
	\begin{minipage}[t]{0.29\textwidth}
	\begin{align*}
		(f_1 \oplus f_2)  \oplus f_3 &\equiv  f_1  \oplus (f_2  \oplus f_3) \\
		f_1 \oplus f_2 &\equiv  f_2  \oplus f_1
	\end{align*}
	\end{minipage}
	\begin{minipage}[t]{0.2\textwidth}
	\begin{align*}
		f\cdot 1 &\equiv f \\
		f\cdot f &\equiv f
	\end{align*}
	\end{minipage}
	\begin{minipage}[t]{0.29\textwidth}
	\begin{align*}
		(f_1 \cdot f_2)  \cdot f_3 &\equiv  f_1  \cdot (f_2  \cdot f_3) \\
		f_1 \cdot f_2 &\equiv  f_2  \cdot f_1
	\end{align*}
	\end{minipage}
	\[
		f_1 \cdot (f_2 \oplus f_3) \equiv f_1 \cdot f_2 \oplus f_1\cdot f_3
	\]
\vspace{-1.5em}
	\caption{Rules for Boolean expressions}\label{fig:beqs}
\end{subfigure}
\vspace{-.5em}
\begin{subfigure}{\textwidth}
	\begin{minipage}{0.48\textwidth}
	\begin{align*}
		(r_1 + r_2) + r_3 &\equiv r_1 + (r_2 + r_3) \\
		r_1 + r_2 &\equiv r_2 + r_1 \\
		r_1 + 0 &\equiv r_1 \\
		r - r &\equiv 0 \\
		(r_1\cdot r_2)\cdot r_3 &\equiv r_1\cdot (r_2\cdot r_3) \\
		r_1 \cdot r_2 &\equiv r_2\cdot r_1 \\
		r \cdot 1 &\equiv r \\
		r_1\cdot (r_2 + r_3) &\equiv r_1\cdot r_2 + r_1\cdot r_3
	\end{align*}
	\end{minipage}
	\begin{minipage}{0.48\textwidth}
	\begin{align*}
		r^0 &\equiv 1 \equiv 1^f \\
		r^1 &\equiv r \equiv r^{f}r^{\neg f} \\
		r^{f_1\oplus f_2} &\equiv r^{f_1} + r^{f_2} - (2r)^{f_1\cdot f_2} \\
		r^{f_1\cdot f_2} &\equiv (r^{f_1})^{f_2} \\
		r_1^fr_2^f &\equiv (r_1r_2)^f \\
		r_1^fr_2^{\neg f} &\equiv r_1^f + r_2^{\neg f} - 1 \\
		r_1^f + r_2^f &\equiv (r_1+r_2)^f + 0^{\neg f}
	\end{align*}
	\end{minipage}
	\caption{Rules for $\mathcal{R}$-expressions.}\label{fig:reqs}
\end{subfigure}
\vspace{-.5em}
\begin{subfigure}{\textwidth}
\begin{align}
	\sum_{x,y}(-1)^{y(x \oplus f)} \ket{\Psi(x)} &\equiv 2 \ket{\Psi(f)}
		\tag*{(H)}\label{eq:iu} \\
	\sum_{y}r(y)\ket{\Psi} &\equiv (r(0) + r(1))\ket{\Psi}
		\tag*{(S)}\label{eq:s}
\end{align}
\vspace{-.5em}
\caption{Rules for sum-over-paths expressions. In-scope variables are not free in any sub-expressions unless explicitly included in parentheses.}\label{fig:seqs}
\end{subfigure}
\caption{Equational theory $\equiv_{\mathcal{R}}$ for unbalanced sums over rings $\mathcal{R}$.}
\label{fig:rewrite}
\vspace{-.5em}
\end{figure}

To show completeness, we proceed by first defining a normal form for $\mathcal{R}$-expressions and showing that every $\mathcal{R}$-expression can be re-written in normal form. This forms the bulk of the proof, as the \ref{eq:s} rule can be used to force the evaluation of any internal variable by the $\mathcal{R}$-expression sub-language.

\begin{definition}[$\mathcal{R}$-expression normal form]\label{def:nf}
An $\mathcal{R}$-expression over the variables $\{x_i\}$ is in \emph{normal form} if it is of the form
\[
	\alpha_{00\cdots 0}^{00\cdots 0 = \vec{x}}\alpha_{00\cdots 1}^{00\cdots 1  = \vec{x}}\cdots \alpha_{11\cdots 1}^{ 11\cdots 1  = \vec{x}}.
\]
\end{definition}

\begin{proposition}[$\mathcal{R}$-expression normalization]\label{prop:rexp}
An $\mathcal{R}$-expression $r$ can be brought into normal form over the variables $\{x_i\}\supseteq FV(r)$ using the equations of \cref{fig:rewrite}.
\end{proposition}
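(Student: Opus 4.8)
The plan is to prove \cref{prop:rexp} by structural induction on $r$, with the slightly strengthened invariant that $r$ rewrites to the normal form over \emph{any} chosen variable set $S\supseteq FV(r)$ whose coefficient $\alpha_{\vec a}$ is the value of $r$ at $\vec x=\vec a$. Only the equations of \cref{fig:beqs,fig:reqs} are needed; \ref{eq:iu} and \ref{eq:s} play no role here. Two preliminary facts handle bookkeeping. \emph{Padding}: a normal form over $S$ rewrites to one over $S\cup\{x'\}$ for fresh $x'$, by splitting each factor via $r^1\equiv r\equiv r^fr^{\neg f}$ and $r^{f_1\cdot f_2}\equiv(r^{f_1})^{f_2}$ (each coefficient is duplicated); this lets me bring the sub-results of an induction step over a common variable set. \emph{Boolean absorption}, provable from \cref{fig:beqs}: for any Boolean $f$ and point $\vec a$, $(\vec a=\vec x)\cdot f\equiv(\vec a=\vec x)\cdot c$ where $c=f(\vec a)\in\{0,1\}$ is the evaluation of $f$ at $\vec a$ --- expand $f$ into a sum of monomials, each of which $(\vec a=\vec x)$ either absorbs or annihilates.

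For the base case $r=\alpha\in\mathcal R$, iterate $\alpha\equiv\alpha^{x_1}\alpha^{\neg x_1}$, split each factor on $x_2$, and so on, collapsing nested exponents with $r^{f_1\cdot f_2}\equiv(r^{f_1})^{f_2}$; since the $i$-th literal is $x_i$ or $\neg x_i$ according to $a_i$, the product of the literals is exactly $(\vec a=\vec x)$, yielding $\prod_{\vec a}\alpha^{\vec a=\vec x}$. For $r=r_1^f$, apply the IH to $r_1$, push $f$ through the product with $(r_1r_2)^f\equiv r_1^fr_2^f$ and exponent composition to reach $\prod_{\vec a}\alpha_{\vec a}^{(\vec a=\vec x)\cdot f}$, use Boolean absorption to replace each exponent by $(\vec a=\vec x)\cdot f(\vec a)$, and fold $f(\vec a)\in\{0,1\}$ into the coefficient via $r^0\equiv1\equiv1^{g}$ or $r^1\equiv r$. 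For $r=r_1\cdot r_2$, apply the IH to both (padding to a common $S$), reorder factors so matching exponents are adjacent, and merge with $r_1^fr_2^f\equiv(r_1r_2)^f$.

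The addition case $r=r_1+r_2$ is the crux, and the plan is to route it through a \emph{product-to-sum} identity established first: for any ring constants $(\gamma_{\vec a})_{\vec a\in\Z_2^k}$ and any Boolean $g$,
\[
	\left(\prod_{\vec a}\gamma_{\vec a}^{\vec a=\vec x}\right)^{g}\equiv\sum_{\vec a}\gamma_{\vec a}^{(\vec a=\vec x)\cdot g}-(2^k-1),
\]
proved by induction on $k$: split the product on one variable as $N_0^{\neg x_k}N_1^{x_k}$, raise to $g$ and push the exponent in, apply the IH to $N_0,N_1$ with the Booleans $\neg x_k g$ and $x_k g$, and combine the two resulting sums with the disjoint-exponent form of the ortho rule, $r_1^{g_1}r_2^{g_2}\equiv r_1^{g_1}+r_2^{g_2}-1$ whenever $g_1g_2\equiv0$ --- itself a derived rule, obtained from $r_1^fr_2^{\neg f}\equiv r_1^f+r_2^{\neg f}-1$ together with $r^{f_1\cdot f_2}\equiv(r^{f_1})^{f_2}$ and the Boolean identity $g_1\cdot\neg g_2\equiv g_1$. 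Instantiating $g=1$ gives $\prod_{\vec a}\gamma_{\vec a}^{\vec a=\vec x}\equiv\sum_{\vec a}\gamma_{\vec a}^{\vec a=\vec x}-(2^k-1)$; the case $\gamma_{\vec a}\equiv0$, together with the separately-checked $\prod_{\vec a}0^{\vec a=\vec x}\equiv0$ (induction on $k$, using $0^f+0^{\neg f}\equiv1$, which comes from the $\oplus$-rule for $f$-powers applied to $f\oplus\neg f$), yields $\sum_{\vec a}0^{\vec a=\vec x}\equiv2^k-1$. With these in hand the addition case is routine: rewrite the normal forms of $r_1,r_2$ into sum form, add them termwise using the $f$-power addition rule so that each $\alpha_{\vec a}^{\vec a=\vec x}+\beta_{\vec a}^{\vec a=\vec x}$ becomes $(\alpha_{\vec a}+\beta_{\vec a})^{\vec a=\vec x}$ plus a $0$-power term, collect those terms with $\sum_{\vec a}0^{\vec a=\vec x}\equiv2^k-1$, and run product-to-sum in reverse to land on $\prod_{\vec a}(\alpha_{\vec a}+\beta_{\vec a})^{\vec a=\vec x}$.

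The main obstacle I anticipate is exactly this addition case: the structural induction and the Boolean steps are mechanical, but addition is where the ring sum genuinely interacts with the $f$-powers, and a clean (rather than case-exploding) derivation hinges on first isolating the product-to-sum conversion as a lemma and on recognizing the disjoint-exponent ortho rule as derived. A secondary nuisance is the variable-ordering bookkeeping --- inserting a fresh variable at an arbitrary position and re-indexing the $2^{|S|}$ coefficients --- which is routine but easy to mangle notationally.
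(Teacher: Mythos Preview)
Your proposal is correct and follows the same overall route as the paper --- structural induction on $r$, with the product case handled by pairwise merging of like-exponent factors and the addition case by peeling off one variable at a time via the ortho rule together with the $f$-power addition rule. Two points of departure are worth noting.

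First, for $r^f$ you avoid the paper's secondary induction on the structure of $f$ entirely: once the base is in normal form you push $f$ into each factor's exponent and invoke the Boolean-absorption identity $(\vec a=\vec x)\cdot f\equiv(\vec a=\vec x)\cdot f(\vec a)$ to evaluate $f$ pointwise. This is cleaner and, importantly, never produces a ring sum, so your $r^f$ case does not depend on the $r_1+r_2$ case. The paper instead treats $r^{f_1\oplus f_2}$ via the rule $r^{f_1\oplus f_2}\equiv r^{f_1}+r^{f_2}-(2r)^{f_1f_2}$ and must then appeal back to the already-established addition case.

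Second, you isolate the one-variable peeling in the addition case as a standalone product-to-sum lemma $(\prod_{\vec a}\gamma_{\vec a}^{\vec a=\vec x})^{g}\equiv\sum_{\vec a}\gamma_{\vec a}^{(\vec a=\vec x)\cdot g}-(2^k-1)$, proved once by induction on $k$ (the auxiliary $g$ being exactly what makes that induction go through), and then apply it forwards and backwards. The paper inlines the same induction inside the case, factoring each normal form as $s_1^{\neg x_m}s_2^{x_m}$, converting products to sums with the ortho rule, merging with the $f$-power addition rule, and recursing on $s_i+t_i$ in $m-1$ variables. The underlying manipulation is the same; your packaging just makes the invariant explicit.
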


A proof of \cref{prop:rexp} is given in \cref{app:proof}.
We next turn our attention to normalization of expressions involving sums. Normalization proceeds by writing the closed sum as a sum over all basis vectors by equating outputs with fresh variables, then summing along each internal variable and normalizing the resulting $\mathcal{R}$-expression.

\begin{theorem}\label{thm:complete}
	$\equiv_{\mathcal{R}}$ is complete for unbalanced sums over any commutative ring $\mathcal{R}$.
\end{theorem}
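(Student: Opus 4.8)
The plan is to establish completeness by normalizing every closed sum to a canonical form and then invoking the uniqueness of \cref{prop:unique}. First, since the cup $\eta=\sum_y\ket{yy}$ and its adjoint cap are themselves unbalanced sums over $\mathcal{R}$ and the ambient symbolic multilinear algebra slides operators along them (the yanking/channel--state identities), it suffices to treat \emph{closed} sums: given closed $\ket{\Psi},\ket{\Phi}$ with $\Psi=\Phi$ as vectors in $\mathcal{R}^{2^n}$, we must derive $\ket{\Psi}\equiv_{\mathcal{R}}\ket{\Phi}$. I will show that any closed sum $\ket{\Psi}=\sum_{\vec y}r\ket{f_1\cdots f_n}$ rewrites, using only the equations of \cref{fig:rewrite}, into the normal form $\sum_{\vec x}\alpha_{00\cdots0}^{00\cdots0=\vec x}\cdots\alpha_{11\cdots1}^{11\cdots1=\vec x}\ket{\vec x}$, which is the vector instance of the explicit representation used to prove universality. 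Because every rule of $\equiv_{\mathcal{R}}$ is sound over an arbitrary commutative ring, the normal form obtained from $\ket{\Psi}$ still denotes $\Psi$; applying $\bra{\vec a}$ returns its coefficient $\alpha_{\vec a}$, forcing $\alpha_{\vec a}=\bra{\vec a}\Psi$. Hence the normal form is literally determined by $\Psi$, so $\ket{\Psi}$ and $\ket{\Phi}$ reduce to the same expression and $\ket{\Psi}\equiv_{\mathcal{R}}\ket{\Phi}$.

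The normalization of a closed sum $\ket{\Psi}=\sum_{\vec y}r\ket{f_1\cdots f_n}$ proceeds in three stages. \textbf{(i) Detach the outputs.} With fresh variables $x_1,\dots,x_n$ I establish the derived equation $\sum_{\vec y}r\ket{f_1\cdots f_n}\equiv_{\mathcal{R}}\sum_{\vec y,\vec x}r\,0^{x_1\oplus f_1}\cdots0^{x_n\oplus f_n}\ket{x_1\cdots x_n}$, after which the ket depends only on the fresh variables. Morally this just says that appending identities to the output wires changes nothing, but as a proof it is where the \ref{eq:iu} rule earns its keep: combined with auxiliary summation variables, the bilinearity of $\Sigma$ in its coefficient, and the $\mathcal{R}$-expression rules governing $0$-powers (such as $0^{f_1\oplus f_2}\equiv\cdots$ and $r^fr^{\neg f}\equiv r$), it lets one trade a complex ket entry $f_i$ for a fresh variable $x_i$ guarded by the constraint $0^{x_i\oplus f_i}$. \textbf{(ii) Collapse the internal sums.} No bound variable of $\vec y$ now occurs in the ket $\ket{\vec x}$, so the \ref{eq:s} rule applies to each of them in turn, folding $\sum_{\vec y}$ into a single $\mathcal{R}$-expression coefficient: $\ket{\Psi}\equiv_{\mathcal{R}}\sum_{\vec x}R(\vec x)\ket{\vec x}$ with $R=\sum_{\vec y}r\,0^{x_1\oplus f_1}\cdots0^{x_n\oplus f_n}$ evaluated out as an $\mathcal{R}$-expression in $\vec x$. \textbf{(iii) Normalize the coefficient.} By \cref{prop:rexp}, $R(\vec x)$ rewrites to the $\mathcal{R}$-expression normal form of \cref{def:nf} over $\vec x$, namely $\alpha_{00\cdots0}^{00\cdots0=\vec x}\cdots\alpha_{11\cdots1}^{11\cdots1=\vec x}$, and the whole sum is then in normal form.

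The technically heaviest ingredient is \cref{prop:rexp} --- that every $\mathcal{R}$-expression collapses to a product of powers $\alpha_{\vec a}^{\vec a=\vec x}$ using only the ring axioms and the $f$-power equations --- but it is proved separately in the appendix, and stage (iii) is only its invocation while stage (ii) is routine bookkeeping once the ket is variable-only. I expect the real obstacle specific to this theorem to live in stage (i): one must be sure the \ref{eq:iu} rule can always be marshalled to expose an output as a fresh variable, and in particular that the factors of $2$ it generates are absorbed into the $0$-power constraints of the normal form rather than forcing a division by $2$, so that the argument still goes through over rings in which $2$ is not invertible. The proof then concludes with a rule-by-rule soundness check of $\equiv_{\mathcal{R}}$ over general $\mathcal{R}$ (the ring and $f$-power rules are immediate; \ref{eq:iu} is soundness of the binary interference identity, $\sum_y(-1)^{yg}$ being $2$ when $g=0$ and $0$ otherwise, and \ref{eq:s} is $\sum_{y\in\Z_2}r(y)=r(0)+r(1)$), which is exactly what licenses reading the normal-form coefficients off as matrix entries and appealing to \cref{prop:unique}.
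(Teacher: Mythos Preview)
Your proposal is correct and matches the paper's approach: use \ref{eq:iu} to replace the ket entries $f_i$ by fresh variables, apply \ref{eq:s} to eliminate all internal variables, and then invoke \cref{prop:rexp} to reach the normal form, with completeness following from \cref{prop:unique}. The only cosmetic difference is that the paper keeps the Kronecker constraint in the form $(-1)^{y_i(z_i\oplus f_i)}$ with auxiliary $y_i$ (which are then also removed by \ref{eq:s}) rather than your pre-collapsed $0^{x_i\oplus f_i}$, and it does not pause over the factor-of-$2$ bookkeeping you flag.
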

\begin{proof}
	Let $\ket{\Psi} = \sum_{\vec{x}}r\ket{f_1f_2\cdots f_n}$ be a closed, unbalanced sum. Then
		\begin{align*}
			\sum_{\vec{x}}r\ket{f_1f_2\cdots f_n} 
				&\equiv \sum_{\vec{x}}\sum_{\vec{y}}\sum_{\vec{z}}r(-1)^{y_1(z_1 \oplus f_1) + y_2(z_2 \oplus f_2) + \cdots + y_n(z_n \oplus f_n)}\ket{z_1 z_2\cdots z_n} & \text{by \ref{eq:i}} \\
				&\equiv \sum_{\vec{z}}\sum_{\vec{x}}\sum_{\vec{y}}r'(\vec{x},\vec{y})\ket{\vec{z}} \\
				&\equiv \sum_{\vec{z}}(r'(00\cdots 0) + r'(00\cdots 1) + \cdots + r'(11\cdots 1))\ket{\vec{z}} & \text{by \ref{eq:s}} \\
				&\equiv \sum_{\vec{z}}\alpha_{00\cdots 0}^{00\cdots 0 = \vec{z}}\alpha_{00\cdots 1}^{00\cdots 1 = \vec{z}}\cdots\alpha_{11\cdots 1}^{11\cdots 1 = \vec{z}}\ket{\vec{z}} & \text{by \cref{prop:rexp}}
		\end{align*}
\end{proof}

\section{Weakening the sum rule}\label{sec:addition}

The equational theory developed in the preceding section is too strong for use in practice. Indeed, re-writing effectively amounts to explicit evaluation of the sum, e.g.,
	\[
		\sum_{x}\ket{\Psi(x)} = \ket{\Psi(0)} + \ket{\Psi(1)},
	\]
together with a set of rules for manipulating certain symbolic expressions over rings. This is made possible by the highly-expressive sub-language of $\mathcal
{R}$-expressions, which allows for the summation of arbitrary $\mathcal{R}$-expressions and hence the super-powered sum rule. However, sums of $\mathcal{R}$ expressions are difficult to re-write and generally require complete expansion of the expression to a normal form. In particular, with unrestricted use of the sum rule we are not likely to discover efficient proofs of equality. 

To limit the power of the sum rule, in this section we define a fragment of the unbalanced sum-over-paths which eliminates sums of \emph{symbolic} amplitudes, and give a complete equational theory over arbitrary fields.

\begin{definition}[Multiplicative sum-over-paths]\label{eq:unbalanced}
The multiplicative fragment of the unbalanced sum-over-paths over a field $\mathcal{F}$ consists of unbalanced sums of the form
\begin{align*}
		a &::= \alpha, \beta \in \mathcal{F} \mid a^f \mid a_1a_2 \\
		\ket{\Psi} &::= \sum_{\vec{y}} a \ket{f_1\cdots f_n}.
\end{align*}
Boolean expressions $f$ are defined as in the unbalanced sum-over-paths.
\end{definition}

We call amplitude expressions of the form $a$ $\mathcal{F}$-expressions. It can be readily observed that as normal forms live in the multiplicative fragment, the multiplicative fragment is again universal, and is equivalent to the full unbalanced sum-over-paths up to $\equiv_{\mathcal{R}}$.

\begin{figure}
\begin{subfigure}[b]{\textwidth}
	\begin{minipage}{0.33\textwidth}
	\begin{align*}
		(a_1\cdot a_2)\cdot a_3 &\equiv a_1\cdot (a_2\cdot a_3) \\
		a_1 \cdot a_2 &\equiv a_2\cdot a_1 \\
		a \cdot 1 &\equiv a
	\end{align*}
	\end{minipage}
	\begin{minipage}{0.33\textwidth}
	\begin{align*}
		a^0 &\equiv 1 \equiv 1^f \\
		a^1 &\equiv a \equiv a^{f}a^{\neg f}
	\end{align*}
	\end{minipage}
	\begin{minipage}{0.33\textwidth}
	\begin{align*}
		a^{f_1\oplus f_2} &\equiv a^{f_1}a^{f_2}(a^{-2})^{f_1\cdot f_2} \\
		a^{f_1\cdot f_2} &\equiv (a^{f_1})^{f_2} \\
		a_1^fa_2^f &\equiv (a_1a_2)^f
	\end{align*}
	\end{minipage}
	\caption{Rules for $\mathcal{F}$-expressions.}\label{fig:feqs}
\end{subfigure}
\begin{subfigure}[b]{\textwidth}
\begin{align}
	\sum_{x,y}(-1)^{y(x \oplus f)} \ket{\Psi(x)} &\equiv 2 \ket{\Psi(f)}
		\tag*{(H)}\label{eq:iii} \\
	\sum_{y,z}a_1^x(y)a_2^{\neg x}(z)\ket{\Psi(x)}
		&\equiv 2 \sum_{y}a_1^x(y)a_2^{\neg x}(y)\ket{\Psi(x)}
		\tag*{(O)}\label{eq:ooo} \\
	\sum_{y}(\alpha^y\beta^{\neg y})^f\ket{\Psi} &\equiv 2\left(\frac{\alpha + \beta}{2}\right)^f \ket{\Psi} \tag*{(A)}\label{eq:aaa}
\end{align}
\caption{Rules for unbalanced sums in the multiplicative fragment. In-scope variables are not free in sub-expressions unless explicitly listed in parentheses.}
\end{subfigure}
\caption{Equational theory $\equiv_{\mathcal{F}}$ of multiplicative sums over fields $\mathcal{F}$.}\label{fig:rewrite2}
\end{figure}

\Cref{fig:rewrite2} defines an equational theory, denoted $\equiv_{\mathcal{F}}$, for the multiplicative fragment which is defined and sound when $\mathcal{F}$ is a field. Note that the equation $a^{f_1\oplus f_2} \equiv a^{f_1}a^{f_2}(a^{-2})^{f_1\cdot f_2}$, which coincides with the lifting of $f_1\oplus f_2$ to $f_1 + f_2 - 2f_1\cdot f_2$, is not well defined in arbitrary rings. Otherwise, the equational theory coincides with the equational theory of $\mathcal{R}$-expressions with rules for sums of expressions removed. We omit the equational rules for Boolean expressions as they are the same as those of \cref{fig:beqs}.

\begin{proposition}[$\mathcal{F}$-expression normalization]\label{prop:fexp}
An $\mathcal{F}$-expression $a$ can be brought into normal form over the variables $\{x_i\}\supseteq FV(a)$ using the equations of \cref{fig:rewrite2}.
\end{proposition}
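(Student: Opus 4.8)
The plan is to prove \cref{prop:fexp} by structural induction on $a$, in close parallel with the proof of \cref{prop:rexp} (\cref{app:proof}), the only genuine change being that the $\oplus$-splitting of exponents is now performed multiplicatively through $a^{f_1\oplus f_2}\equiv a^{f_1}a^{f_2}(a^{-2})^{f_1\cdot f_2}$ rather than through the ring-sum rule $r^{f_1\oplus f_2}\equiv r^{f_1}+r^{f_2}-(2r)^{f_1\cdot f_2}$. Fix once and for all the target variable set $\{x_i\}_{i\le n}\supseteq FV(a)$ and, for $\vec c\in\Z_2^n$, write $e_{\vec c}:=(\vec c=\vec x)$ for the bitwise-equality indicator (which in our syntax is $\prod_i(x_i\oplus\neg c_i)$). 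Two Boolean facts will be used repeatedly: $\bigoplus_{\vec c}e_{\vec c}\equiv 1$ and $e_{\vec c}e_{\vec c'}\equiv 0$ for $\vec c\neq\vec c'$. Both are valid identities of commutative Boolean rings, hence provable from the axioms of \cref{fig:beqs} (which are complete for that theory), just as the variety-level facts used for $\mathcal{R}$-expressions.

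For the base case $a=\alpha\in\mathcal{F}$, write $\alpha\equiv\alpha^{1}\equiv\alpha^{\bigoplus_{\vec c}e_{\vec c}}$ and peel off one disjunct at a time with the splitting rule; the cross term is always $e_{\vec c}\cdot\bigl(\bigoplus_{\vec c'}e_{\vec c'}\bigr)\equiv\bigoplus_{\vec c'}e_{\vec c}e_{\vec c'}\equiv 0$, so each factor $(a^{-2})^{f_1\cdot f_2}$ collapses to $(a^{-2})^{0}\equiv 1$ and we are left with $\prod_{\vec c}\alpha^{e_{\vec c}}$, a normal form (all $\alpha_{\vec c}$ equal to $\alpha$). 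For a product $a=a_1a_2$, apply the induction hypothesis to obtain normal forms $\prod_{\vec c}\alpha_{\vec c}^{e_{\vec c}}$ and $\prod_{\vec c}\beta_{\vec c}^{e_{\vec c}}$; commuting and reassociating the factors so that $\alpha_{\vec c}^{e_{\vec c}}$ meets $\beta_{\vec c}^{e_{\vec c}}$ and applying $a_1^fa_2^f\equiv(a_1a_2)^f$ gives $\prod_{\vec c}(\alpha_{\vec c}\beta_{\vec c})^{e_{\vec c}}$, and $\alpha_{\vec c}\beta_{\vec c}$ is again a scalar, so this is a normal form.

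The power case $a=a_1^{f}$ is where the work sits. By the induction hypothesis $a_1\equiv\prod_{\vec c}\alpha_{\vec c}^{e_{\vec c}}$, so $a\equiv\bigl(\prod_{\vec c}\alpha_{\vec c}^{e_{\vec c}}\bigr)^{f}\equiv\prod_{\vec c}\bigl(\alpha_{\vec c}^{e_{\vec c}}\bigr)^{f}\equiv\prod_{\vec c}\alpha_{\vec c}^{e_{\vec c}\cdot f}$, using $a_1^fa_2^f\equiv(a_1a_2)^f$ and $a^{f_1\cdot f_2}\equiv(a^{f_1})^{f_2}$ right-to-left (note $FV(f)\subseteq FV(a)\subseteq\{x_i\}$). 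One then invokes the Boolean identity $e_{\vec c}\cdot f\equiv e_{\vec c}\cdot f(\vec c)$: both sides take the value $f(\vec c)$ when $\vec x=\vec c$ and $0$ otherwise, so it is again provable from \cref{fig:beqs}. Substituting, every exponent $e_{\vec c}\cdot f$ becomes $e_{\vec c}$ when $f(\vec c)=1$ and $0$ when $f(\vec c)=0$; the factors of the second kind are rewritten $\alpha_{\vec c}^{0}\equiv 1\equiv 1^{e_{\vec c}}$, leaving a normal form in which $\alpha_{\vec c}$ is replaced by $1$ exactly when $f(\vec c)=0$.

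The main obstacle is not any single calculation but the three Boolean-ring derivations ($\bigoplus_{\vec c}e_{\vec c}\equiv1$, $e_{\vec c}e_{\vec c'}\equiv0$, and $e_{\vec c}f\equiv e_{\vec c}f(\vec c)$), which play the role of the algebraic-variety bookkeeping in the $\mathcal{R}$-case and should be dispatched exactly as in \cref{app:proof}; everything else is commuting, associating, and collapsing trivial powers. It is worth recording that although $a^{-2}$ is ill-behaved over a general ring (hence the restriction to fields for soundness of $\equiv_{\mathcal{F}}$), the splitting rule is applied above only with a vanishing product of exponents, so the factor $(a^{-2})^{f_1\cdot f_2}$ is always just $(a^{-2})^{0}\equiv 1$; the normalization itself never uses inversion, so it is really the soundness of $\equiv_{\mathcal{F}}$, not the normalization, that needs $\mathcal{F}$ to be a field.
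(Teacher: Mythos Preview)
Your argument is correct and takes a genuinely different route from the paper's. The paper simply mirrors \cref{prop:rexp}: the constant case is expanded one variable at a time via $a\equiv a^{x_{m+1}}a^{\neg x_{m+1}}$, and the power case $a^{f}$ is handled by a \emph{nested} structural induction on $f$, invoking the multiplicative $\oplus$-splitting rule on a general (already normalized) $a$ in the $f_1\oplus f_2$ subcase. Your power case is more elementary: by rewriting $e_{\vec c}\cdot f\equiv e_{\vec c}\cdot f(\vec c)$ at the Boolean level you sidestep the inner induction on $f$ and never invoke $\oplus$-splitting there at all. That actually buys something, since the paper's use of $a^{-2}$ for a compound $a$ sits uneasily with the grammar (there is no inverse in the syntax), whereas you only ever apply the splitting rule to scalars.

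One small wrinkle remains in your base case: the rule instance still forms $(\alpha^{-2})^{f_1\cdot f_2}$ syntactically before you collapse it to $1$, and when $\alpha=0$ the scalar $\alpha^{-2}$ does not exist even in a field, so that instance is ill-formed. The paper's iterated use of $a\equiv a^{f}a^{\neg f}$ for the constant case avoids this entirely; swapping it in would also make your closing remark that the normalization ``never uses inversion'' literally true rather than only morally so.
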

\begin{proof}
The proof follows identically to the proof of \cref{prop:rexp}. The one different case of $a^{f_1\oplus f_2}$ is handled similar to the corresponding case in \cref{prop:rexp} by the fact that $a^{f_1\oplus f_2} \equiv a^{f_1}a^{f_2}(a^{-2})^{f_1\cdot f_2}$ where each of $a^{f_1}$, $a^{f_2}$, and $(a^{-2})^{f_1\cdot f_2}$ can be normalized by the inductive hypothesis and the case of $a^{f_1\cdot f_2}$. The case is then finished by the normalization of products.
\end{proof}

The \ref{eq:aaa} rule, which is a transliteration of the average rule of the ZH-calculus, replaces the \ref{eq:s} rule of $\equiv_{\mathcal{R}}$. Rather than summing ``top-down'' as in the normalization of the unbalanced sum-over-paths, the average rule allows amplitudes to only be summed ``bottom-up,'' i.e. by summing pure elements of $\mathcal{F}$. In particular, we may think about the normalization of a sum in (almost) normal form over one internal variable:
\[
	\sum_{\vec{x}}\sum_{y}\alpha_{00\cdots 00}^{00\cdots 00 = \vec{x}y}\alpha_{00\cdots 01}^{00\cdots 01 = \vec{x}y}\cdots\alpha_{11\cdots 11}^{11\cdots 11 = \vec{x}y}\ket{\vec{x}}.
\]
Using the rules of $\equiv_{\mathcal{R}}$, we may factor out the $y$ exponents and evaluate the sum over $y$ top-down as
\begin{align*}
	\sum_{\vec{x}}\sum_{y}\alpha_{00\cdots 00}^{00\cdots 00 = \vec{x}y}\alpha_{00\cdots 01}^{00\cdots 01 = \vec{x}y}\cdots\alpha_{11\cdots 11}^{11\cdots 11 = \vec{x}y}\ket{\vec{x}}
		&\equiv_{\mathcal{R}} \sum_{\vec{x}}\sum_y(\alpha_{00\cdots 00}^{00\cdots 0 = \vec{x}}\cdots\alpha_{11\cdots 10}^{11\cdots 1 = \vec{x}})^{\neg y}(\alpha_{00\cdots 01}^{ 00\cdots 0 = \vec{x}}\cdots\alpha_{11\cdots 11}^{11\cdots 1 = \vec{x}})^{y}\ket{\vec{x}} \\
		&\equiv_{\mathcal{R}} \sum_{\vec{x}}(\alpha_{00\cdots 00}^{00\cdots 0 = \vec{x}}\cdots\alpha_{11\cdots 10}^{11\cdots 1= \vec{x}} + \alpha_{00\cdots 01}^{ 00\cdots 0= \vec{x}}\cdots\alpha_{11\cdots 11}^{11\cdots 1= \vec{x}})\ket{\vec{x}}
\end{align*}
Under the more restrictive rules of $\equiv_{\mathcal{F}}$, the $y$ exponents must be brought inwards and the amplitudes summed in pairs:
\begin{align*}
	\sum_{\vec{x}}\sum_{y}\alpha_{00\cdots 00}^{00\cdots 00 = \vec{x}y}\alpha_{00\cdots 01}^{00\cdots 01 = \vec{x}y}\cdots\alpha_{11\cdots 11}^{11\cdots 11 = \vec{x}y}\ket{\vec{x}}
		&\equiv_{\mathcal{F}} \sum_{\vec{x}}\sum_y(\alpha_{00\cdots 00}^{\neg y}\alpha_{00\cdots 01}^{y})^{00\cdots 0 = \vec{x}}\cdots(\alpha_{11\cdots 10}^{\neg y}\alpha_{11\cdots 11}^{y})^{11\cdots 1 = \vec{x}}\ket{\vec{x}}
\end{align*}
In order to be able to apply the \ref{eq:aaa} rule to eliminate $y$ with this factorization, a distinct variable is needed for each pair, which can be achieved with the \ref{eq:ooo} rule --- a transliteration of the ortho rule of the ZH-calculus --- since every pair's exponent varies in the polarity of at least one variable:
\begin{align*}
	&\sum_{\vec{x}}\sum_y(\alpha_{00\cdots 00}^{\neg y}\alpha_{00\cdots 01}^{y})^{00\cdots 0 = \vec{x}}\cdots(\alpha_{11\cdots 10}^{\neg y}\alpha_{11\cdots 11}^{y})^{11\cdots 1 = \vec{x}}\ket{\vec{x}} \\ \equiv_{\mathcal{F}}&\sum_{\vec{x}}\sum_{\vec{y}}\frac{1}{2^{2^n - 1}}(\alpha_{00\cdots 00}^{\neg y_1}\alpha_{00\cdots 01}^{y_1})^{00\cdots 0 = \vec{x}}\cdots(\alpha_{11\cdots 10}^{\neg y_n}\alpha_{11\cdots 11}^{y_n})^{11\cdots 1 = \vec{x}}\ket{\vec{x}} & \text{by \ref{eq:ooo}} \\
\equiv_{\mathcal{F}}&\sum_{\vec{x}}\frac{2^{2^n}}{2^{2^n - 1}}\left(\frac{\alpha_{00\cdots 00} + \alpha_{00\cdots 01}}{2}\right)^{00\cdots 0 = \vec{x}}\cdots\left(\frac{\alpha_{11\cdots 10} + \alpha_{11\cdots 11}}{2}\right)^{11\cdots 1 = \vec{x}}\ket{\vec{x}} & \text{by \ref{eq:aaa}}\\ \equiv_{\mathcal{F}}&\sum_{\vec{x}}(\alpha_{00\cdots 00}+\alpha_{00\cdots 01})^{00\cdots 0 = \vec{x}}\cdots(\alpha_{11\cdots 10}+\alpha_{11\cdots 11})^{11\cdots 1 = \vec{x}}\ket{\vec{x}}
\end{align*}
Note that in the final line above, each sum involves concrete values taken from $\mathcal{F}$ and hence can be evaluated explicitly over $\mathcal{F}$.

\begin{theorem}\label{thm:complete2}
	$\equiv_{\mathcal{F}}$ is complete for multiplicative sums over any field $\mathcal{F}$.
\end{theorem}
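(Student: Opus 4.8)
The plan is to follow the proof of \cref{thm:complete} almost verbatim, replacing the single application of the \ref{eq:s} rule by a sequence of applications of \ref{eq:ooo} and \ref{eq:aaa}. As there, it suffices to show that every closed multiplicative sum can be rewritten under $\equiv_{\mathcal{F}}$ into a normal form: since every rule of \cref{fig:rewrite2} is sound and normal forms are unique (\cref{prop:unique}), if $\Psi = \Phi$ then the normal forms reached from $\ket{\Psi}$ and $\ket{\Phi}$ coincide syntactically, whence $\ket{\Psi}\equiv_{\mathcal{F}}\mathrm{NF}(\Psi)=\mathrm{NF}(\Phi)\equiv_{\mathcal{F}}\ket{\Phi}$.

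First I would apply \ref{eq:iii} once per output bit, exactly as in \cref{thm:complete}, to rewrite a closed sum $\sum_{\vec x}a\ket{f_1\cdots f_n}$ into $\sum_{\vec z}\sum_{\vec x}\sum_{\vec y}\, a\cdot(-1)^{\sum_i y_i(z_i\oplus f_i)}\ket{\vec z}$; the new amplitude is again an $\mathcal{F}$-expression. Normalising it over all the variables by \cref{prop:fexp} puts the sum in the shape $\sum_{\vec z}\sum_{\vec u}\big(\prod_{\vec b}\alpha_{\vec b}^{\,\vec b=\vec u\vec z}\big)\ket{\vec z}$, where $\vec u$ collects the internal variables. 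It then remains to eliminate the internal variables one at a time, which is the content of the displayed computation immediately preceding the theorem. To remove a single internal variable $v$: factor the normalised amplitude by the two values of $v$ into a product of ``pairs'' $\big(\alpha^{\neg v}\beta^{v}\big)^{g}$, one for each assignment $g$ of the remaining variables, using the $\mathcal{F}$-power rules; use \ref{eq:ooo} to give each pair its own fresh copy of $v$ at the cost of a fixed power of $2^{-1}$; apply \ref{eq:aaa} to each copy, turning the pair $\big(\alpha^{\neg v_i}\beta^{v_i}\big)$ into $2\big(\tfrac{\alpha+\beta}{2}\big)$ with the $2$'s cancelling the earlier $2^{-1}$'s; and re-normalise the amplitude with \cref{prop:fexp}, each $\tfrac{\alpha+\beta}{2}$ now being a literal of $\mathcal{F}$. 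Iterating over every internal variable leaves $\sum_{\vec z}\big(\prod_{\vec b}\alpha_{\vec b}^{\,\vec b=\vec z}\big)\ket{\vec z}$, a normal form.

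The one step that genuinely goes beyond \cref{thm:complete} and the displayed single-variable derivation is the use of \ref{eq:ooo}. As stated, that rule only splits a variable shared between an $x$-polarised and a $\neg x$-polarised amplitude into two; from it I would first derive the general replication principle that a single summed variable shared among $N$ factors whose guard conditions are pairwise orthogonal (as holds here, since distinct assignments $g$ disagree in the polarity of at least one variable) may be replaced by $N$ independent copies, by peeling off one factor at a time and rewriting the remaining product using the fact that each $g$ implies the negation of every other pair's condition, so that the hypothesis of \ref{eq:ooo} is met at each step. Carrying out this induction cleanly, and keeping exact track of the accumulated powers of $2$ so that they cancel against those introduced by \ref{eq:aaa} and leave precisely the coefficients $\alpha_{\vec b}$, is the main obstacle; the rest is a routine transcription of the ring-case argument together with \cref{prop:fexp}.
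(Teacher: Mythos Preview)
Your proposal is correct and follows the same route as the paper: apply \ref{eq:iii} to pass to a sum over basis kets, normalise the amplitude via \cref{prop:fexp}, then eliminate internal variables one by one with the \ref{eq:ooo}/\ref{eq:aaa} computation displayed immediately before the theorem. The only quibble is that the iterated use of \ref{eq:ooo} should be organised as a binary partition of the $2^n$ factors by the polarity of $x_1$, then $x_2$, and so on---a single guard $\vec b=\vec x$ cannot in general be separated from \emph{all} the remaining guards by one literal, so ``peeling off one factor at a time'' as you phrase it will not directly satisfy the hypothesis of \ref{eq:ooo}---but this is exactly the step you already flag as the main obstacle, and the paper itself leaves that iteration implicit.
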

\begin{proof}
	Let $\ket{\Psi} = \sum_{\vec{x}}a\ket{f_1f_2\cdots f_n}$ be a closed, multiplicative sum and note that $\ket{\Psi} \equiv_{\mathcal{F}} \sum_{\vec{x}}\sum_{\vec{y}}a'\ket{\vec{x}}$ by \ref{eq:iii}. Then by \cref{prop:fexp}, $a'$ can be normalized to give
\[
	\ket{\Psi} \equiv_{\mathcal{F}} \sum_{\vec{x}}\sum_{\vec{y}}\alpha_{00\cdots 0}^{00\cdots 0 = \vec{x}\vec{y}}\alpha_{00\cdots 1}^{00\cdots 1 = \vec{x}\vec{y}}\cdots\alpha_{11\cdots 1}^{11\cdots 1 = \vec{x}\vec{y}}\ket{\vec{x}}.
\]
If $\vec{y}$ is empty, then we're done. Otherwise, we can remove one variable at a time with the \ref{eq:ooo} and \ref{eq:aaa} rules as above until no internal variables are left.
\end{proof}

\section{Discussion}

We have now given a concrete syntax for sum-over-paths expressions with unbalanced amplitudes. We gave equational theories for rings and fields, and showed that each is complete. While the equational theories we give here are simplistic and inefficient, our hope is that a complete equational theory will allow the development of effective, complete re-writing systems. 

The equational theories we have developed --- particularly $\equiv_{\mathcal{F}}$ --- as well as our normal forms can be viewed as translations of the ZH-calculus with varying levels of freedom in the expression of \emph{Boolean} data. In the ZH-calculus, propagation of Boolean expressions along wires is a \emph{semantic} property, while in the sum-over-paths it is \emph{syntactic}. This allows the \ref{eq:iii} rule to take a more general form allowing the substitution of a variable with an expression, whereas in the ZH-calculus this logic is spread across several different rules. 

As an exercise we could attempt to formulate a more ZH-like sum-over-paths by restricting the Boolean expression language further, e.g.,
\begin{align*}
		f &::= 0 \mid 1 \mid x,y,z,\dots \mid f_1\cdot f_2 \\
		a &::= \alpha \in \mathcal{R} \mid \alpha^{f} \mid a_1a_2 \\
		\ket{\Psi} &::= \sum_{v\in V} a \ket{x_1\cdots x_n},
\end{align*}
and reformulate our equational theory for such a language. One natural formulation of the \ref{eq:iii} rule is $\sum_{x,y}(-1)^{xy}(-1)^{yf}a(x) \ket{\Psi} \equiv 2 a(f) \ket{\Psi}$, which corresponds to a slightly more general version of the HS1 rule of the ZH-calculus, which would be more accurately translated as $\sum_{x,y}(-1)^{xy}(-1)^{yf}\alpha^{xg} \ket{\Psi} \equiv 2 \alpha^{fg} \ket{\Psi}$. With either formulation, a secondary rule is needed to cover propagation of Boolean sums, corresponding to expressions of the form
\[
	\sum_{x,y}(-1)^{xy}(-1)^{yf_1}(-1)^{yf_2}\cdots (-1)^{yf_k}a(x) \ket{\Psi}.
\]
Semantically, this is equivalent to the expression $2 a(f_1\oplus f_2 \oplus \cdots \oplus f_k) \ket{\Psi}$, but since the language cannot express Boolean sums, we must distribute $a$ over the sum in a single step. One natural option to avoid imposing restrictions on the underlying ring, as we did in this section, is to restrict $a$ to the phase-free fragment. In particular,
\[
	\sum_{x,y}(-1)^{xy}(-1)^{yf_1}\cdots (-1)^{yf_k}(-1)^{xg}\ket{\Psi} \equiv 2(-1)^{f_1g}\cdots(-1)^{f_kg}\ket{\Psi},
\] which is equivalent up to \ref{eq:iii} to the BA2 rule of the ZH-calculus.

On the one hand, it is unclear what the utility of such an exercise might be, beyond as a symbolic syntax for ZH diagrams. On the other hand, these investigations shed light on both the similarities and differences between the graphical and symbolic approach. Notably, the symbolic approach naturally allows highly expressive languages, and the use of existing theories developed within the framework of non-categorical algebra and symbolic computation. On the other hand, the use of an expressive symbolic language naturally makes it more challenging to apply \emph{local} reasoning than in graphical theories. For these reasons we hypothesize that the sum-over-paths approach may be more amenable to \emph{automated} reasoning, while the graphical approach may be more amenable to \emph{interactive} reasoning. We leave it for future work to explore this premise further.

\section{Acknowledgements}
The author wishes to thank Louis Lemonnier and Aleks Kissinger for bringing the connection to the ZH-calculus and a path sum formulation of the ortho rule to their attention. The author also wishes to thank Julien Ross for many helpful discussions, as well as the anonymous reviewers for comments which have improved the presentation of the paper. This work was supported by Canada's AARMS and NSERC.

\bibliographystyle{eptcs} 
\bibliography{completeSOP}

\begin{thebibliography}{10}
\providecommand{\bibitemdeclare}[2]{}
\providecommand{\surnamestart}{}
\providecommand{\surnameend}{}
\providecommand{\urlprefix}{Available at }
\providecommand{\url}[1]{\texttt{#1}}
\providecommand{\href}[2]{\texttt{#2}}
\providecommand{\urlalt}[2]{\href{#1}{#2}}
\providecommand{\doi}[1]{doi:\urlalt{https://doi.org/#1}{#1}}
\providecommand{\eprint}[1]{arXiv:\urlalt{https://arxiv.org/abs/#1}{#1}}
\providecommand{\bibinfo}[2]{#2}

\bibitemdeclare{inproceedings}{a18}
\bibitem{a18}
\bibinfo{author}{Matthew \surnamestart Amy\surnameend} (\bibinfo{year}{2018}):
  \emph{\bibinfo{title}{Towards Large-Scale Functional Verification of
  Universal Quantum Circuits}}.
\newblock In: {\slshape \bibinfo{booktitle}{Proceedings of the 15th
  International Conference on Quantum Physics and Logic}},
  \bibinfo{series}{QPL'18}, pp. \bibinfo{pages}{1--21},
  \doi{10.4204/EPTCS.287.1}.
\newblock \eprint{1805.06908}.

\bibitemdeclare{inproceedings}{abr22}
\bibitem{abr22}
\bibinfo{author}{Matthew \surnamestart Amy\surnameend}, \bibinfo{author}{Owen
  \surnamestart Bennett-Gibbs\surnameend} \& \bibinfo{author}{Neil~J.
  \surnamestart Ross\surnameend} (\bibinfo{year}{2022}):
  \emph{\bibinfo{title}{Symbolic synthesis of Clifford circuits and beyond}}.
\newblock In: {\slshape \bibinfo{booktitle}{Proceedings of the 19th
  International Conference on Quantum Physics and Logic}},
  \bibinfo{series}{QPL'22}.
\newblock \eprint{2204.14205}.

\bibitemdeclare{article}{agr20}
\bibitem{agr20}
\bibinfo{author}{Matthew \surnamestart Amy\surnameend},
  \bibinfo{author}{Andrew~N. \surnamestart Glaudell\surnameend} \&
  \bibinfo{author}{Neil~J. \surnamestart Ross\surnameend}
  (\bibinfo{year}{2020}): \emph{\bibinfo{title}{{Number-Theoretic
  Characterizations of Some Restricted Clifford+T Circuits}}}.
\newblock {\slshape \bibinfo{journal}{Quantum}} \bibinfo{volume}{4}, p.
  \bibinfo{pages}{252}, \doi{10.22331/q-2020-04-06-252}.
\newblock \eprint{1908.06076}.

\bibitemdeclare{inproceedings}{bk18}
\bibitem{bk18}
\bibinfo{author}{Miriam \surnamestart Backens\surnameend} \&
  \bibinfo{author}{Aleks \surnamestart Kissinger\surnameend}
  (\bibinfo{year}{2018}): \emph{\bibinfo{title}{{ZH}: A Complete Graphical
  Calculus for Quantum Computations Involving Classical Non-linearity}}.
\newblock In: {\slshape \bibinfo{booktitle}{Proceedings of the 15th
  International Conference on Quantum Physics and Logic}},
  \bibinfo{series}{QPL'18}, pp. \bibinfo{pages}{23--42},
  \doi{10.4204/eptcs.287.2}.
\newblock \eprint{1805.02175}.

\bibitemdeclare{article}{bkmww23}
\bibitem{bkmww23}
\bibinfo{author}{Miriam \surnamestart Backens\surnameend},
  \bibinfo{author}{Aleks \surnamestart Kissinger\surnameend},
  \bibinfo{author}{Hector \surnamestart Miller-Bakewell\surnameend},
  \bibinfo{author}{John \surnamestart van~de Wetering\surnameend} \&
  \bibinfo{author}{Sal \surnamestart Wolffs\surnameend} (\bibinfo{year}{2023}):
  \emph{\bibinfo{title}{Completeness of the {ZH}-calculus}}.
\newblock {\slshape \bibinfo{journal}{Compositionality}} \bibinfo{volume}{5},
  p.~\bibinfo{pages}{5}, \doi{10.32408/compositionality-5-5}.
\newblock \eprint{2103.06610}.

\bibitemdeclare{inproceedings}{cd08}
\bibitem{cd08}
\bibinfo{author}{Bob \surnamestart Coecke\surnameend} \& \bibinfo{author}{Ross
  \surnamestart Duncan\surnameend} (\bibinfo{year}{2008}):
  \emph{\bibinfo{title}{Interacting Quantum Observables}}.
\newblock In: {\slshape \bibinfo{booktitle}{Proceeds of the The 35th
  International Colloquium on Automata, Languages and Programming}},
  \bibinfo{series}{ICALP'08}, pp. \bibinfo{pages}{298--310},
  \doi{10.1007/978-3-540-70583-3_25}.

\bibitemdeclare{book}{fh65}
\bibitem{fh65}
\bibinfo{author}{Richard~P. \surnamestart Feynman\surnameend} \&
  \bibinfo{author}{Albert~R. \surnamestart Hibbs\surnameend}
  (\bibinfo{year}{1965}): \emph{\bibinfo{title}{Quantum mechanics and path
  integrals}}.
\newblock \bibinfo{series}{International series in pure and applied physics},
  \bibinfo{publisher}{McGraw-Hill}.

\bibitemdeclare{inproceedings}{lwk20}
\bibitem{lwk20}
\bibinfo{author}{Louis \surnamestart Lemonnier\surnameend},
  \bibinfo{author}{John \surnamestart van~de Wetering\surnameend} \&
  \bibinfo{author}{Aleks \surnamestart Kissinger\surnameend}
  (\bibinfo{year}{2020}): \emph{\bibinfo{title}{{Hypergraph simplification:
  Linking the path-sum approach to the ZH-calculus}}}.
\newblock In: {\slshape \bibinfo{booktitle}{Proceedings of the 17th
  International Conference on Quantum Physics and Logic}},
  \bibinfo{series}{QPL'20}, \doi{10.4204/EPTCS.340.10}.
\newblock \eprint{2003.13564}.

\bibitemdeclare{inproceedings}{v21}
\bibitem{v21}
\bibinfo{author}{Renaud \surnamestart Vilmart\surnameend}
  (\bibinfo{year}{2021}): \emph{\bibinfo{title}{{The Structure of
  Sum-Over-Paths, its Consequences, and Completeness for Clifford}}}.
\newblock In: {\slshape \bibinfo{booktitle}{Foundations of Software Science and
  Computation Structures}}, {\slshape \bibinfo{series}{FoSSaCS'21}}
  \bibinfo{volume}{12650}, pp. \bibinfo{pages}{531--550},
  \doi{10.1007/978-3-030-71995-1_27}.
\newblock \eprint{2003.05678}.

\bibitemdeclare{inproceedings}{v22}
\bibitem{v22}
\bibinfo{author}{Renaud \surnamestart Vilmart\surnameend}
  (\bibinfo{year}{2023}): \emph{\bibinfo{title}{{Completeness of Sum-Over-Paths
  for Toffoli-Hadamard and the Dyadic Fragments of Quantum Computation}}}.
\newblock In: {\slshape \bibinfo{booktitle}{31st EACSL Annual Conference on
  Computer Science Logic}}, {\slshape \bibinfo{series}{CSL'23}}
  \bibinfo{volume}{252}, pp. \bibinfo{pages}{36:1--36:17},
  \doi{10.4230/LIPIcs.CSL.2023.36}.
\newblock \eprint{2205.02600}.

\bibitemdeclare{unpublished}{ww19}
\bibitem{ww19}
\bibinfo{author}{John \surnamestart van~de Wetering\surnameend} \&
  \bibinfo{author}{Sal \surnamestart Wolffs\surnameend} (\bibinfo{year}{2019}):
  \emph{\bibinfo{title}{Completeness of the Phase-free ZH-calculus}}.
\newblock \eprint{1904.07545}.

\end{thebibliography}

\appendix

\section{Proof of \cref{prop:rexp}}\label{app:proof}
\begin{proposition}[$\mathcal{R}$-expression normalization]
An $\mathcal{R}$-expression $r$ can be brought into normal form over the variables $\{x_i\}\supseteq FV(r)$ using the equations of \cref{fig:rewrite}.
\end{proposition}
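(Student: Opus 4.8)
The plan is to prove the claim by structural induction on the $\mathcal{R}$-expression $r$, reducing the whole argument to two closure facts over a fixed variable set $\{x_i\}$. Fact (i): every constant $\alpha\in\mathcal{R}$ is $\equiv_{\mathcal{R}}$ to the \emph{constant} normal form $\prod_{\vec b}\alpha^{\vec b=\vec x}$. Fact (ii): if $N_1=\prod_{\vec b}\alpha_{\vec b}^{\vec b=\vec x}$ and $N_2=\prod_{\vec b}\beta_{\vec b}^{\vec b=\vec x}$ are normal forms over $\{x_i\}$, then $N_1N_2$, $N_1+N_2$, and $N_1^{f}$ (for any Boolean $f$ with $FV(f)\subseteq\{x_i\}$) are each $\equiv_{\mathcal{R}}$ to a normal form, with coefficients $\alpha_{\vec b}\beta_{\vec b}$, $\alpha_{\vec b}+\beta_{\vec b}$, and ($\alpha_{\vec b}$ if $f$ evaluates to $1$ at $\vec b$, else $1$), respectively. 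Granting (i) and (ii), the induction is immediate: the base case $r=\alpha$ is (i), and the inductive cases $r_1r_2$, $r_1+r_2$, $r_1^{f}$ apply the matching part of (ii) to the normal forms supplied by the induction hypotheses, after which the coefficient arithmetic is performed concretely in $\mathcal{R}$.

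For (i), I would peel off one variable at a time: from $\alpha\equiv\alpha^{x_1}\alpha^{\neg x_1}$ (the axiom $r\equiv r^{f}r^{\neg f}$), recursively on each factor, together with $r^{f_1\cdot f_2}\equiv(r^{f_1})^{f_2}$ to merge nested exponents, one reaches after $n$ steps $\prod_{\vec b}\alpha^{\ell_1^{b_1}\cdots\ell_n^{b_n}}$ with $\ell_i^{1}=x_i$ and $\ell_i^{0}=\neg x_i$; the exponent $\ell_1^{b_1}\cdots\ell_n^{b_n}$ is exactly the selector $\vec b=\vec x$ after unfolding its definition and simplifying each $b_i\oplus\neg x_i$ by the Boolean axioms of \cref{fig:beqs}. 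For the product case of (ii), reorder the $2\cdot 2^n$ factors of $N_1N_2$ by associativity and commutativity of $\cdot$ and contract each matching pair with $r_1^{f}r_2^{f}\equiv(r_1r_2)^{f}$. For the power case, use $(r_1r_2)^{f}\equiv r_1^{f}r_2^{f}$ and $r^{f_1\cdot f_2}\equiv(r^{f_1})^{f_2}$ to push $f$ into $N_1$, obtaining $\prod_{\vec b}\alpha_{\vec b}^{(\vec b=\vec x)\cdot f}$; since the equational theory of Boolean rings is complete, $(\vec b=\vec x)\cdot f\equiv(\vec b=\vec x)$ when $f(\vec b)=1$ and $\equiv 0$ otherwise, so the factors with $f(\vec b)=0$ collapse via $r^{0}\equiv 1\equiv 1^{\vec b=\vec x}$ and the rest are re-indexed, leaving a normal form.

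The sum case of (ii) is the heart of the argument, and I would prove it by a secondary induction on the number of variables $n$. For $n=0$, normal forms are ring elements and one just adds in $\mathcal{R}$. For $n\ge 1$, factor each operand through $x_1$ as $N_i\equiv A_i^{\neg x_1}B_i^{x_1}$ with $A_i,B_i$ normal forms over $\{x_2,\dots,x_n\}$ (using $r_1^{f}r_2^{f}\equiv(r_1r_2)^{f}$ to group the $2^n$ factors by the value of $b_1$ and $r^{f_1\cdot f_2}\equiv(r^{f_1})^{f_2}$ to split off the $b_1=x_1$ conjunct). Then $r_1^{f}r_2^{\neg f}\equiv r_1^{f}+r_2^{\neg f}-1$ turns each two-factor product into a sum, giving $N_1+N_2\equiv(A_1^{\neg x_1}+A_2^{\neg x_1})+(B_1^{x_1}+B_2^{x_1})-2$; the rule for sums of like powers collapses $A_1^{\neg x_1}+A_2^{\neg x_1}$ to $(A_1+A_2)^{\neg x_1}$ plus a ``$0$-power'' correction, and likewise for the $B$ pair, the two corrections add to $1$ (derivable from $0^{f_1\oplus f_2}\equiv 0^{f_1}+0^{f_2}-0^{f_1\cdot f_2}$ at $f_1=x_1$, $f_2=\neg x_1$), and $r_1^{f}r_2^{\neg f}\equiv r_1^{f}+r_2^{\neg f}-1$ read in reverse recombines everything into $(A_1+A_2)^{\neg x_1}(B_1+B_2)^{x_1}$. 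Invoking the inner induction hypothesis to normalize the sums $A_1+A_2$ and $B_1+B_2$ over $\{x_2,\dots,x_n\}$ and re-absorbing the $\neg x_1$, $x_1$ exponents into selectors over $\{x_1,\dots,x_n\}$ then yields a normal form whose coefficients are the pointwise sums $\alpha_{\vec b}+\beta_{\vec b}$. This is the only step I expect to be delicate: because $r_1^{f}r_2^{\neg f}\equiv r_1^{f}+r_2^{\neg f}-1$ only converts a \emph{two}-factor product with complementary exponents, there is no one-shot way to turn a $2^n$-factor normal form into a sum of powers, so the additive combination must be threaded recursively through the variable ordering while bookkeeping the integer constants the conversions introduce (here the $-2$ from two product-to-sum steps cancels exactly against the $+1$ from $0^{\neg x_1}+0^{x_1}\equiv 1$); the product, power, and Boolean-side manipulations are otherwise routine.
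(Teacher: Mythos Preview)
Your proof is correct and follows essentially the same architecture as the paper's: structural induction on $r$, with the constant case handled by peeling off one variable at a time, the product case by pairing matching exponents, and the sum case by a secondary induction on the number of variables via the factorization $N_i\equiv A_i^{\neg x}B_i^{x}$ together with the $r_1^f r_2^{\neg f}\equiv r_1^f+r_2^{\neg f}-1$ and $r_1^f+r_2^f\equiv(r_1+r_2)^f+0^{\neg f}$ rules (your bookkeeping of the integer corrections is equivalent to the paper's, which instead collapses $0^{\neg x}0^{x}$ directly via $r\equiv r^f r^{\neg f}$). The one genuine difference is your treatment of $r^f$: the paper does a further induction on the structure of $f$, whereas you first normalize $r$, push $f$ inside to get $\prod_{\vec b}\alpha_{\vec b}^{(\vec b=\vec x)\cdot f}$, and then invoke completeness of the Boolean-ring axioms to reduce each exponent to either $\vec b=\vec x$ or $0$; this is a legitimate and slightly cleaner shortcut that avoids the inner induction on $f$, at the cost of appealing to Boolean completeness as a black box.
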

\begin{proof}
By induction on the structure of an $\mathcal{R}$-expression. 

\paragraph{Case: $\alpha$.} By induction on the number of variables $m$. If $m=0$ then $\alpha$ is already in normal form. For $m > 0$, let $\alpha \equiv \alpha_{00\cdots 0}^{\vec{x} = 00\cdots 0}\cdots \alpha_{11\cdots 1}^{\vec{x} = 11\cdots 1}$ and observe that $r$ can be brought into normal form involving one additional variable $x_{m+1}$:
		\begin{align*}
			\alpha &\equiv \alpha_{00\cdots 0}^{\vec{x} = 00\cdots 0}\cdots \alpha_{11\cdots 1}^{\vec{x} = 11\cdots 1} \\
				&\equiv (\alpha_{00\cdots 0}^{\vec{x} = 00\cdots 0}\cdots \alpha_{11\cdots 1}^{\vec{x} = 11\cdots 1})^{\neg x_{m+1}}(\alpha_{00\cdots 0}^{\vec{x} = 00\cdots 0}\cdots \alpha_{11\cdots 1}^{\vec{x} = 11\cdots 1})^{x_{m+1}} \\
				&\equiv (\alpha_{00\cdots 0}^{\vec{x} = 00\cdots 0})^{\neg x_{m+1}}\cdots (\alpha_{11\cdots 1}^{\vec{x} = 11\cdots 1})^{\neg x_{m+1}}(\alpha_{00\cdots 0}^{\vec{x} = 00\cdots 0})^{x_{m+1}}\cdots(\alpha_{11\cdots 1}^{\vec{x} = 11\cdots 1})^{x_{m+1}} \\
				&\equiv \alpha_{00\cdots 0}^{(\vec{x} = 00\cdots 0)(x_{m+1} = 0)}\cdots \alpha_{11\cdots 1}^{(\vec{x} = 11\cdots 1)(x_{m+1} = 0)}\alpha_{00\cdots 0}^{(\vec{x} = 00\cdots 0)(x_{m+1} = 1)}\cdots\alpha_{11\cdots 1}^{(\vec{x} = 11\cdots 1)(x_{m+1} = 1)} \\
				&\equiv \alpha_{0\cdots 00}^{\vec{x}x_{m+1} = 00\cdots 00}\cdots \alpha_{11\cdots 1}^{\vec{x}x_{m+1} = 11\cdots 10}\alpha_{00\cdots 0}^{\vec{x}x_{m+1} = 00\cdots 01}\cdots\alpha_{11\cdots 1}^{\vec{x}x_{m+1} = 11\cdots 11}
		\end{align*}

\paragraph{Case: $r_1r_2$.} Let $r_1 \equiv  \alpha_{00\cdots 0}^{\vec{x} = 00\cdots 0}\alpha_{00\cdots 1}^{\vec{x} = 00\cdots 1}\cdots \alpha_{11\cdots 1}^{\vec{x} = 11\cdots 1}$ and $r_2 \equiv  \beta_{00\cdots 0}^{\vec{x} = 00\cdots 0}\beta_{00\cdots 1}^{\vec{x} = 00\cdots 1}\cdots \beta_{11\cdots 1}^{\vec{x} = 11\cdots 1}$. Then
		\begin{align*}
		r_1r_2 &\equiv  \alpha_{00\cdots 0}^{\vec{x} = 00\cdots 0}\cdots \alpha_{11\cdots 1}^{\vec{x} = 11\cdots 1}\beta_{00\cdots 0}^{\vec{x} = 00\cdots 0}\cdots \beta_{11\cdots 1}^{\vec{x} = 11\cdots 1} \\
			  &\equiv  \alpha_{00\cdots 0}^{\vec{x} = 00\cdots 0}\beta_{00\cdots 0}^{\vec{x} = 00\cdots 0}\cdots \alpha_{11\cdots 1}^{\vec{x} = 11\cdots 1} \beta_{11\cdots 1}^{\vec{x} = 11\cdots 1} \\
			  &\equiv  (\alpha_{00\cdots 0}\beta_{00\cdots 0})^{\vec{x} = 00\cdots 0} \cdots (\alpha_{11\cdots 1}\beta_{11\cdots 1})^{\vec{x} = 11\cdots 1}
		\end{align*}

\paragraph{Case: $r_1 + r_2$.} Let $r_1$ be written in normal form as $r_1 \equiv \alpha_{00\cdots 0}^{\vec{x} = 00\cdots 0}\alpha_{00\cdots 1}^{\vec{x} = 00\cdots 1}\cdots \alpha_{11\cdots 1}^{\vec{x} = 11\cdots 1}$ and factorize this as $s_1^{\neg x_m}s_2^{x_m}$. Likewise, let $r_2 \equiv  \beta_{00\cdots 0}^{\vec{x} = 00\cdots 0}\beta_{00\cdots 1}^{\vec{x} = 00\cdots 1}\cdots \beta_{11\cdots 1}^{\vec{x} = 11\cdots 1} \equiv  t_1^{\neg x_m}t_2^{x_m}$. Then we can observe:
		\begin{align*}
		r_1 + r_2 &\equiv  s_1^{\neg x_m}s_2^{x_m} + t_1^{\neg x_m}t_2^{x_m} \\
			  &\equiv  s_1^{\neg x_m} + s_2^{x_m} - 1 + t_1^{\neg x_m} + t_2^{x_m} - 1 \\
			  &\equiv  s_1^{\neg x_m} + s_2^{x_m} - 1 + t_1^{\neg x_m} + t_2^{x_m} - 1 \\
			  &\equiv  (s_1 + t_1)^{\neg x_m} + 0^{x_m} + (s_2 + t_2)^{x_m} + 0^{\neg x_m} - 2 \\
			  &\equiv  (s_1 + t_1)^{\neg x_m} + (s_2 + t_2)^{x_m} - 1 + 0^{\neg x_m} + 0^{x_m} - 1 \\
			  &\equiv  (s_1 + t_1)^{\neg x_m}(s_2 + t_2)^{x_m} + 0^{\neg x_m}0^{x_m} \\
			  &\equiv  (s_1 + t_1)^{\neg x_m}(s_2 + t_2)^{x_m} \\
		\end{align*}

Now $s_1 + t_1$ and $s_2 + t_2$ are $\mathcal{R}$-expressions in $m-1$ variables, so induction on the number of variables suffices to finish this case. Note that in the base case $m = 0$, $r_1 + r_2$ is an ordinary ring sum and hence can be evaluated in $\mathcal{R}$ to the normal form $\alpha$. 

\paragraph{Case: $r^{f}$.} We proceed by induction on the structure of $f$. If $f=0$ then $r^f\equiv 1$ which can be brought into normal form by the previous case. If $f=1$ then $r^f \equiv r$ which is already in normal form. If $f = x_i$ then
		\begin{align*}
			r^{x_i} &\equiv (\alpha_{00\cdots 0}^{\vec{x} = 00\cdots 0}\alpha_{00\cdots 1}^{\vec{x} = 00\cdots 1}\cdots \alpha_{11\cdots 1}^{\vec{x} = 11\cdots 1})^{x_i} \\
				&\equiv (\alpha_{00\cdots 0}^{\vec{x} = 00\cdots 0})^{x_i}(\alpha_{00\cdots 1}^{\vec{x} = 00\cdots 1})^{x_i}\cdots (\alpha_{11\cdots 1}^{\vec{x} = 11\cdots 1})^{x_i} \\
				&\equiv 1^{\vec{x} = 00\cdots 0} \cdots 1^{\vec{x} = 01\cdots 1} \alpha_{10\cdots 0}^{\vec{x} = 10\cdots 0} \cdots \alpha_{11\cdots 1}^{\vec{x} = 11\cdots 1}
		\end{align*}
		
		For the inductive cases, if $f = f_1\cdot f_2$, then
		\begin{align*}
			r^{f_1\cdot f_2} &\equiv (r^{f_1})^{f_2} \\
				&\equiv (\alpha_{00\cdots 0}^{\vec{x} = 00\cdots 0}\alpha_{00\cdots 1}^{\vec{x} = 00\cdots 1}\cdots \alpha_{11\cdots 1}^{\vec{x} = 11\cdots 1})^{f_2} \\
				&\equiv (\alpha_{00\cdots 0}^{\vec{x} = 00\cdots 0})^{f_2}(\alpha_{00\cdots 1}^{\vec{x} = 00\cdots 1})^{f_2}\cdots (\alpha_{11\cdots 1}^{\vec{x} = 11\cdots 1})^{f_2} \\
				&\equiv r_{00\cdots 0}r_{00\cdots 1}\cdots r_{11\cdots 1}
		\end{align*}
		where each $r_{\vec{x}}$ is in normal form and the $r_1r_2$ case suffices to finish this case.
		
		Finally, if $f \equiv f_1\oplus f_2$, then $r^{f_1\oplus f_2} \equiv r^{f_1} + r^{f_2} - (2r)^{f_1\cdot f_2}$ where each term can be brought into normal form by the inductive hypothesis and the previous case. The $r_1 + r_2$ case then completes this case.
\end{proof}

\end{document}